\documentclass[10pt,twoside,a4paper]{article}         


\usepackage[T1]{fontenc}
\usepackage[english]{babel}
\selectlanguage{english}

\usepackage[pdftex]{graphicx}           
\usepackage{setspace}                   
\usepackage{indentfirst}               
\usepackage{courier}                    
\usepackage{type1cm}                    
\usepackage{listings}                  
\usepackage{titletoc}
\usepackage[fixlanguage]{babelbib}
\usepackage[font=small,format=plain,labelfont=bf,up,textfont=it,up]{caption}
\usepackage[usenames,svgnames,dvipsnames]{xcolor}
\usepackage[a4paper,top=2.54cm,bottom=2.0cm,left=2.0cm,right=2.54cm]{geometry}
\usepackage[pdftex,plainpages=false,pdfpagelabels,colorlinks=true,citecolor=DarkGreen,linkcolor=NavyBlue,urlcolor=DarkRed,filecolor=green,bookmarksopen=true]{hyperref} 
\usepackage[all]{hypcap}                
\fontsize{60}{62}\usefont{OT1}{cmr}{m}{n}{\selectfont}
\usepackage{dsfont}
\usepackage{amsthm}
\usepackage{amsfonts}
\usepackage{amsmath}
\usepackage{amssymb}
\usepackage{mathrsfs}
\usepackage{lmodern}
\usepackage{relsize}
\usepackage{url}
\theoremstyle{plain}
\usepackage{mathtools}
\usepackage{thmtools}
\usepackage{enumitem}
\usepackage{calrsfs}
\usepackage{epsfig}
\usepackage{amsthm}

 
\newtheorem{teo}{Theorem}[section]
\newtheorem{prop}{Proposition}[section]
\newtheorem{lemma}{Lemma}[section]

\newtheorem*{remark}{Remark}
\newtheorem{remark2}{Remark}
\newtheorem{definition}{Definition}[section]
\theoremstyle{definition}

\numberwithin{equation}{section}

 \newcommand{\pp}{\mathsf{p}}
\newcommand{\vv}{\mathsf{V}}

\title{Ergodicity versus non-ergodicity for Probabilistic Cellular Automata on rooted trees}
\author{Bruno Kimura\\
\footnotesize{Delft Institute of Applied Mathematics}\\
\footnotesize{\texttt{bkimura@tudelft.nl}}\\
\footnotesize{Technische Universiteit Delft, Van Mourilk Broekmanweg 6, 2628 XE Delft , The Netherlands}\\
[0.3cm] Wioletta Ruszel\\
\footnotesize{Delft Institute of Applied Mathematics}\\
\footnotesize{\texttt{W.M.Ruszel@tudelft.nl}}\\
\footnotesize{Technische Universiteit Delft, Van Mourilk Broekmanweg 6, 2628 XE Delft , The Netherlands}\\
[0.3cm] Cristian Spitoni\\
\footnotesize{Institute of Mathematics}\\
\footnotesize{\texttt{C.Spitoni@uu.nl}}\\
\footnotesize{Universiteit Utrecht, Budapestlaan 6, 3584 CD Utrecht , The Netherlands}\\}
\date{\today}


\begin{document}

\maketitle


\begin{abstract}
In this article we study a class of shift-invariant and positive rate probabilistic cellular automata (PCAs) on rooted d-regular trees $\mathbb{T}^d$.  In a first result we extend the results of \cite{pca} on trees, namely  we prove that to every stationary measure $\nu$ of the PCA  we can associate a space-time Gibbs measure $\mu_{\nu}$ on $\mathbb{Z} \times \mathbb{T}^d$. Under  certain assumptions on the dynamics  the converse is also true.

A second result concerns proving sufficient conditions for ergodicity and  non-ergodicity of our PCA on d-ary trees for $d\in \{ 1,2,3\}$ and characterizing the invariant Bernoulli product measures.
\end{abstract}


\section{Introduction}
Cellular Automata (CAs) are discrete-time dynamical systems on a 
spatially extended discrete space. They are well known for being easy to implement and
for exhibiting a rich and complex nonlinear behaviour as 
emphasized for instance in~\cite{Wolfram1984,Wolfram1984Nature,NonLinearBook,Kari}; furthermore, they
can give rise to multiple levels of organization~\cite{hoekstra}. 
Probabilistic Cellular Automata (PCAs) whose the updating rule is now considered to be stochastic, see \cite{sirakoulis2012cellular}, are a straightforward generalization of CAs and are employed as modeling tools in 
a wide range of applications, e.g. HIV infection \cite{pandey}, biological immune system \cite{tome},  weather forecast  \cite{dorre}, heart pacemaker tissue \cite{danuta}, and opinion forming  \cite{bagnoli}. Moreover, a natural context in which the PCAs main ideas are of interest 
is that of evolutionary games \cite{PGSFM,PG,PS}.

Strong relations exist as well  between PCAs and the general equilibrium statistical 
mechanics framework~\cite{Wolfram:RevModPhys.55.601,PCA:order:disorder,LMS, Derrida,GLD,  pca,GJH, PSS, tvs,Vas69}. A central question is the characterization of  the equilibrium behavior of a general  PCA dynamics. For instance, one primary interest is the study of its ergodic properties, e.g. the long-term behavior of the PCA and its 
dependence on the initial probability distribution. Regarding the ergodicity for PCAs on infinite lattices, see for instance  \cite{tvs} for details and references. Moreover,  conditions for ergodicity for general PCAs can be found in the
following papers: \cite{LMS, ferrari, louis2,maes, malyshev, Steiff}.
Furthermore, in case of a  translation-invariant PCA on $\mathbb{Z}^d$  with positive rates, it has been shown in \cite{pca}  that
the law of the trajectories, starting from any stationary distribution, is given by a Gibbs state for some space-time associated potential (in $\mathbb{Z}^{d+1}$). Moreover, it has also been proven that the converse is true:  all the translation-invariant Gibbs states for such potential correspond to statistical
space-time histories for the PCA. Therefore,  phase transition for the space-time potential is closely related to the PCA
ergodicity in the sense that non-uniqueness of translation invariant Gibbs states is equivalent to non-uniqueness of stationary measures for the PCA. The main ingredient for proving this result
is the use of the local variational principle for the entropy density of the Gibbs measure. However, as it has been proved in \cite{Bur}, the variational principle for Gibbs states fails for nearest neighbor finite state statistical mechanics systems on $3$-ary trees. 
Hence, a first result to this paper is to extend the results presented by \cite{pca} for a class of PCAs on infinite rooted trees. In particular, the PCAs considered in this paper have 
 positive rate shift-invariant local transition probabilities such that each local probabilistic rule  depends only on the spins of the children of the node. This class of PCAs has generally a Bernoulli product measure as an invariant measure, and they are the natural generalization on trees of the models considered in \cite{MM1}.

A second type of results in this paper is to give conditions for ergodicity in case of $d$-ary trees, with $d\in\{1,2,3\}$. Our positive rate PCAs satisfy indeed such conditions (i.e.  (\ref{lem:statsol}) and 
(\ref{bodybuilder2})) that, when iterating the dynamics from the Bernoulli product measure, the resulting space-time diagram defines non-trivial random fields with very weak dependences.
This fact allows us to give a detailed analysis of the ergodicity problem and, for two relevant examples of PCA dynamics, we are able to find the critical parameters.

The paper is organized as follows. In Section \ref{PCAG} we extend the results of \cite{pca} in case of infinite rooted  $d$-ary trees. We first define the PCA on a countably infinite set and in this
general framework we show how stationary measures for a PCA can be naturally associated to Gibbs measures (Theorem~\ref{stm}). In order to state the converse result,  
we first restrict ourselves to the case of infinite rooted trees and to PCAs with nondegenerate shift-invariant local transition probabilities that depends only on the spins of the children of the node.
For this class of PCAs, we state that all the time-invariant Gibbs states for the potential correspond to statistical
space-time histories for the PCA (Theorem~\ref{drift}). In Section \ref{ENE} we give results concerning conditions for the ergodicity of the PCA on $d$-ary trees. First we characterize Bernoulli product stationary measures via Lemma~\ref{lem:statsol}. In Theorem~\ref{Ergo_1} we show that for $d=1$ the PCA is always ergodic, and the same occurs for $d=2$ with the additional assumption of spin-flip symmetry of the local transition
probabilities. In Theorem~\ref{Th:Ergo_2} the case of $d=3$ is studied. We give two examples (Section~\ref{Ergo_2}, Section~\ref{Ergo_4}) where the critical parameters can be computed. Section \ref{proofs}  and the Appendices are devoted to the proofs of the main results.

\section{From PCAs to Gibbs measures and back}\label{PCAG}
\subsection{PCAs on countably infinite sets}
Let the single spin space be a nonempty finite set $S$ and let $\mathsf{V}$ denote a 
countably infinite set (for example, the $d$-dimensional cubic lattice $\mathbb{Z}^{d}$ or, more generally, the vertex set of a countably 
infinite graph). In the following we introduce a special class of discrete-time Markov chains on the state space $\Omega_{0} = S^{\mathsf{V}}$ 
whose main feature is the fact that
given the previous configuration, for the next one 
all spins are simultaneously updated accoding to independent local transition probabilities (\emph{parallel updating}), the so-called probabilistic cellular automata. 

We define the probabilistic cellular automaton as follows. 
\begin{definition} \label{defi1}
A PCA is a discrete-time Markov chain on $\Omega_0$ with the following properties.
At each site $i$ in $\mathsf{V}$
\begin{enumerate}
\item[(a)] corresponding to each configuration $x \in \Omega_{0}$ we associate a probability measure $p_{i}(\cdot|x)$ on $S$, and
\item[(b)] assume that for every spin $s$, the map
\[x \mapsto  p_{i}(s|x)\]
is a local function. So, there is a finite subset $U(i)$ of $\mathsf{V}$ such that the equality $p_{i}(s|x) = p_{i}(s|y)$ holds for every $s$ whenever 
$x$ and $y$ satisfy $x_{j}=y_{j}$ for each $j$ in $U({i})$.
\end{enumerate}
In this setting, we associate to each point $x$ in $\Omega_{0}$ the product measure 
\begin{equation}\label{pcadef}
P(dy|x) = \bigotimes\limits_{i \in \mathsf{V}} p_{i}(dy_{i}|x),	
\end{equation}
and introduce the probabilistic cellular automaton dynamics on our state space $\Omega_{0}$ by considering the Markov kernel 
$P$ given by the expression
\begin{equation}
P(x,B) = P(B|x)	
\end{equation}
where $B$ is a Borel set of $\Omega_{0}$.
\end{definition}
Now, we recall the definition of a stationary measure for the dynamics $P$.
\begin{definition}\label{stamrs}
A probability measure $\nu$ on $\Omega_0$ is called stationary for the dynamics $P$ defined above if 
\[
\int P(x, B) \nu(dx) = \nu (B)
\]
holds for every Borel set $B$ of $\Omega_{0}$.
\end{definition}
\subsection{From PCA to Gibbs measures...}
In this section we will show how stationary measures for a PCA can be naturally associated to Gibbs measures for a corresponding equilibrium statistical mechanical model.
Let us consider the set of sites given by the countably infinite set $\mathbb{Z} \times \mathsf{V}$,
the collection $\mathscr{S}$ consisting of all nonempty finite subsets of $\mathbb{Z} \times \mathsf{V}$.
We also consider the configuration space $\Omega = S^{\mathbb{Z} \times \mathsf{V}}$ together with its product $\sigma$-algebra $\mathscr{F}$. 
Given an arbitrary space-time spin configuration $\omega$ in $\Omega$, 
for each site $\mathsf{x}$ in $\mathbb{Z} \times \mathsf{V}$, say $\mathsf{x} = (n,i)$,
let $\omega_{n,i}$ denote the value $\omega_{\mathsf{x}}$ of the spin at this site, just for simplicity. 
Furthermore, for each integer $n$ and each configuration 
$\omega$, we define the configuration at time $n$ as the
element $\omega_{n}$ of $\Omega_{0}$ given by $\omega_{n} = (\omega_{n,i})_{i \in \mathsf{V}}$. 

Now, let us consider again the setting from the previous section. We will assume that the PCA dynamics is nondegenerate, that is, the local transition
probabilities have positive rates: $p_{i}(s|x)>0$ holds for all $i \in \mathsf{V}$, $s \in S$ and $x \in \Omega_{0}$. Furthermore, we also 
suppose that for each site $i$, the set
\begin{equation}\label{cccp}
\{j \in \mathsf{V}: i \in U(j)\}	
\end{equation}
is finite, which means that at each step in the dynamics of the PCA, each spin can have influence only on the future state of a finite number of spins. Given a stationary measure $\nu$ 
for $P$, it is possible to construct a probability measure $\mu_{\nu}$ on $(\Omega,\mathscr{F})$ uniquely determined by the identity 
\begin{equation}\label{muu}
\mu_{\nu}(\omega_{t} \in B_{0}, \omega_{t+1} \in B_{1}, \dots, \omega_{t+n} \in B_{n}) = \int_{B_{0}}\nu(dx_{0})\int_{B_{1}}P(x_{0},dx_{1})\cdots\int_{B_{n}}P(x_{n-1},dx_{n}),
\end{equation} 
where $t$ is an integer, $n$ a positive integer, and $B_{0}, B_{1}, \dots, B_{n}$ are Borel sets of $\Omega_{0}$.
In the following, given a site $\mathsf{x}$ in $\mathbb{Z}\times \mathsf{V}$, say $\mathsf{x} = (n,i)$, we will use $U(\mathsf{x})$ to denote the set
\begin{equation*}
U(\mathsf{x}) = \{(n-1,j): j \in U(i)\}.	
\end{equation*}
Observe that our assumption (\ref{cccp}) is equivalent to say that for each point $\mathsf{x}$, the set
\[\{\mathsf{y} \in \mathbb{Z}\times \mathsf{V}: \mathsf{x} \in U(\mathsf{y})\}\]
is finite. This remark is very useful for proving the next theorem, whose proof is given in {Appendix \ref{app1}}.

\begin{teo}\label{stm}
The space-time measure $\mu_{\nu}$ obtained from a stationary measure $\nu$ for the PCA is a Gibbs measure for the interaction $\Phi = (\Phi_{A})_{A \in \mathscr{S}}$, where each 
$\Phi_{A}: \Omega \rightarrow \mathbb{R}$ is given by 	
\begin{equation}\label{int}
\Phi_{A}(\omega) = 
\begin{cases}
-\log p_{i}(\omega_{\mathsf{x}}|\omega_{n-1}) &\text{if $A = \{\mathsf{x}\}\cup U(\mathsf{x})$ for some $\mathsf{x} = (n,i)$,}\\
0 & \text{otherwise.}
\end{cases}	
\end{equation}
\end{teo}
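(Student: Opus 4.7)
The strategy is to verify the DLR equations for $\mu_\nu$ with respect to the Gibbsian specification associated with $\Phi$. First, assumption (\ref{cccp}) ensures that $\Phi$ is of effectively finite range: for every space-time site $\mathsf{x} = (n,i)$, the only sets $A \ni \mathsf{x}$ with $\Phi_A \not\equiv 0$ are $\{\mathsf{x}\}\cup U(\mathsf{x})$ itself together with the sets $\{\mathsf{y}\}\cup U(\mathsf{y})$ as $\mathsf{y}$ ranges over the finite family $F(\mathsf{x}) := \{\mathsf{y}\in \mathbb{Z}\times\mathsf{V} : \mathsf{x}\in U(\mathsf{y})\}$. Positivity of the rates makes each $\Phi_A$ a finite real number, so for every finite $\Lambda \in \mathscr{S}$ the finite-volume Hamiltonian and the Gibbsian specification $\gamma_\Lambda(\cdot\mid \omega_{\Lambda^c})$ are unambiguously defined.

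The heart of the argument is checking the single-site DLR identity at an arbitrary $\mathsf{x} = (n,i)$. Stationarity of $\nu$ implies that $\mu_\nu$ is a two-sided stationary Markov chain in time, and the product form (\ref{pcadef}) of the transition kernel means the spins at a common time are conditionally independent given the previous time slice. Combining these with the Markov property forwards and backwards, and discarding the factors of the joint density that do not depend on $\omega_\mathsf{x}$, one obtains
\[
\mu_\nu\bigl(\omega_\mathsf{x}=s \,\big|\, \omega_\mathsf{y},\ \mathsf{y}\neq \mathsf{x}\bigr) \;\propto\; p_i\bigl(s\,\big|\,\omega_{n-1}\bigr)\prod_{\mathsf{y}\in F(\mathsf{x})} p_{j(\mathsf{y})}\bigl(\omega_\mathsf{y}\,\big|\,\omega_n\bigr),
\]
where $j(\mathsf{y})$ is the spatial coordinate of $\mathsf{y}$ and on the right-hand side $\omega_n$ is understood with its $i$-th entry set to $s$. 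Comparing with the definition (\ref{int}), this right-hand side is exactly $\exp\bigl(-\sum_{A\ni \mathsf{x}}\Phi_A\bigr)$ normalised over $s\in S$, so the single-site DLR equation for the specification derived from $\Phi$ holds.

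Since that specification is consistent (the family $\gamma_\Lambda$ arises from a single interaction $\Phi$), the single-site identity can be propagated to any finite $\Lambda \in \mathscr{S}$ by induction on $|\Lambda|$, for instance via iterated applications of the tower property of conditional expectations. This yields $\mu_\nu\gamma_\Lambda = \mu_\nu$ for every $\Lambda \in \mathscr{S}$, i.e.\ $\mu_\nu$ is a Gibbs measure for $\Phi$. I expect the main technical step to be the single-site computation above: one must argue carefully that conditioning on the entire space-time complement $(\mathbb{Z}\times \mathsf{V})\setminus\{\mathsf{x}\}$ reduces, via the Markov property in both time directions together with the finite-influence assumption (\ref{cccp}), to conditioning on only the finitely many coordinates that actually appear in the proportionality.
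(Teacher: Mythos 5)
Your overall goal --- verifying the DLR equations for the specification built from $\Phi$ --- matches the paper's, but the reduction you rely on contains a genuine gap. You check (in outline) the single-site DLR identity and then assert that it ``can be propagated to any finite $\Lambda\in\mathscr{S}$ by induction on $|\Lambda|$, for instance via iterated applications of the tower property.'' This propagation does not follow from the tower property or from consistency of the specification: consistency gives $\gamma_{\Lambda}\gamma_{\Lambda'}=\gamma_{\Lambda}$ only for $\Lambda'\subseteq\Lambda$, i.e.\ it lets you pass from larger volumes to smaller ones, whereas your induction needs to go the other way, and there is no identity expressing $\gamma_{\Lambda\cup\{\mathsf{x}\}}$ in terms of $\gamma_{\Lambda}$ and $\gamma_{\{\mathsf{x}\}}$. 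The statement you need is in fact true --- for a specification with strictly positive densities, $\mu\gamma_{\{\mathsf{x}\}}=\mu$ for all single sites does imply $\mu\gamma_{\Lambda}=\mu$ for all finite $\Lambda$ (Georgii, \emph{Gibbs Measures and Phase Transitions}, Theorem 1.33) --- but its proof is a separate, nontrivial argument exploiting positivity, not an induction via the tower property. As written, this step would fail.

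The second issue is that the single-site computation, which you correctly identify as ``the main technical step,'' is exactly where the content of the theorem lies, and you defer it rather than carry it out. Conditioning on the infinite complement $(\mathbb{Z}\times\mathsf{V})\setminus\{\mathsf{x}\}$ has to be realised as a limit of conditionings on finite sets, and one needs finite sets with the right closure property under $\mathsf{y}\mapsto U(\mathsf{y})$ so that the relevant cylinder probabilities factor into products of transition kernels. The paper does precisely this: it constructs an exhausting sequence $(\Delta_m)$ such that $U(\mathsf{x})\subseteq\Delta_m$ for every $\mathsf{x}\in\Delta_m$ outside the bottom time layer, proves the cylinder factorisation of Lemma \ref{lemma1} from the defining identity (\ref{muu}) and the stationarity of $\nu$, and then verifies the DLR identity for an \emph{arbitrary} finite $\Lambda$ in one stroke by writing $e^{-H_{\Lambda}^{\Phi}}$ as a ratio of such products and letting $m\to\infty$. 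That route both supplies the missing limiting argument and makes your propagation step unnecessary. To salvage your version you would need to (i) carry out the finite-volume approximation honestly, which essentially amounts to reproving Lemma \ref{lemma1} and the choice of the sets $\Delta_m$, and (ii) replace the tower-property induction by an explicit appeal to the single-site criterion for positive specifications.
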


\subsection{PCA on infinite rooted trees}
We specify now the class of PCAs that will be considered in this paper. We introduce indeed probabilistic cellular automata on $d$-ary trees $\mathsf{V}=\mathbb{T}^d$ with root $\mathbf{o}$ and degree $\deg(x)=d+1$ for all vertices $x\neq \mathbf{o}$ and $\deg(\mathbf{o})=d$.
 Without loss of generality, 
the $d$-ary tree $\mathbb{T}^d$ can be regarded as the set 
\[\bigcup\limits_{n \geq 0}\{0,\dots,d-1\}^{n}\]
consisting of all finite sequences of integers from $0$ to $d-1$. Given finite sequences $i$ in $\{0,\dots,d-1\}^{n}$ and $j$ in $\{0,\dots,d-1\}^{m}$, 
say $i=(i_{k})_{k=0}^{n-1}$ and $j = (j_{k})_{k=0}^{m-1}$,
we naturally define their sum $i+j$ as the concatenation of these sequences, i.e., the sum is the element of $\{0,\dots,d-1\}^{m+n}$ given by
\begin{eqnarray*}
(i+j)_{k} =
\begin{cases}
i_{k}&\text{if}\;k \in \{0,\dots,n-1\},\\
j_{k-n}&\text{if}\;k \in \{n,\dots,m+n-1\}.
\end{cases}	
\end{eqnarray*}  	
Once defined the translation on $\mathbb{T}^{d}$, then we are allowed to associate to each site $i$ in $\mathbb{T}^{d}$ 
the shift map $\Theta_{i} : S^{\mathbb{T}^{d}} \rightarrow S^{\mathbb{T}^{d}}$ defined by
\begin{equation}
\Theta_{i}x = (x_{i+j})_{j \in \mathbb{T}^{d}}	
\end{equation}
at each point $x=(x_{j})_{j \in \mathbb{T}_{d}}$. Furthermore, for each $k \in \{0,\dots,d-1\}$, we denote by $e_{k}$ the sequence $e_{k}=(k)$ consisting only of the number $k$, therefore, 
the $e_{k}$'s are the neighbors of the root $\mathbf{o}$ of $\mathbb{T}^{d}$. 

\medskip

From now on, we consider the single spin space $S=\{-1,+1\}$, so, the state space $\Omega_0$ is described as $\Omega_0 = \{-1,+1\}^{\mathbb{T}^{d}}$. 
Following \cite{dobrushin, louisthesis}, we give the definitions of attractive dynamics and of repulsive dynamics. In order to do that we introduce the notation $x \leq y$ to indicate that $x$ and $y$ are elements of $\Omega_0$ that satisfy $x_i \leq  y_i$ for all $i \in \mathbb{T}^{d}$. 


\begin{definition}
\label{def_attr}
We call the dynamics $P$ {\it attractive} if for every positive integer $n$, for all configurations $x,y$ such that $x\leq y$ and each nondecreasing local function $f$, we have
\begin{equation}
\label{attr1}
P^n(x,f)\leq P^n(y,f).
\end{equation}
\end{definition}

\begin{definition}
We call the dynamics $P$ {\it repulsive} if for every positive integer $n$, for all configurations $x,y$ such that $x\leq y$ and each nondecreasing local function $f$, we have 
\begin{equation}
\label{repul1}
P^n(x,f)\ge P^n(y,f).
\end{equation}
\end{definition}

By \cite{dobrushin,louisthesis} it follows that the dynamics is  {\it attractive} if and only if for all configurations $x,y$ such that $x\leq y$ we have $p_{\mathbf{o}}(+1|x) \leq p_{\mathbf{o}}(+1|y)$; furthermore, it is {\it repulsive} if and only if
for all configurations $x,y$ such that $x\leq y$ we have $p_{\mathbf{o}}(+1|x) \geq p_{\mathbf{o}}(+1|y)$.

\medskip

The PCAs considered in this paper have nondegenerate shift-invariant local transition probabilities such that each probabilistic rule 
$p_{i}(\cdot|x)$ depends only on the spins of the children of $i$.  More precisely, we will state the following assumptions on the transition kernel.\\
\bigskip

\noindent
{\bf Assumptions:}
\begin{enumerate}
\item[(A1)] each $p_{\mathbf{o}}(\cdot|x)$ is a probability measure such that $p_{\mathbf{o}}(s|x)>0$ holds for all $s \in \{-1,+1\}$,
\item[(A2)] the map $x \mapsto p_{\mathbf{o}}(s|x)$ depends only on the values of
$x$ on $U(\mathbf{o})=\{e_{0},\dots,e_{d-1}\}$, and
\item[(A3)]   for each $i$ in $\mathbb{T}^{d} \backslash \{\mathbf{o}\}$, the local transition probability $p_{i}(\cdot|x)$ satisfies 
\begin{equation}
p_{i}(s|x) = p_{\mathbf{o}}(s|\Theta_{i}x).	
\end{equation}
\end{enumerate}
 Note that Assumption (A1) is the so-called nondegeneracy property, while Assumption (A3) is the invariance of the PCA dynamics under tree shifts.
We remark as well that, it follows from $(A2)$ and $(A3)$ that the map $x \mapsto p_{i}(s|x)$ depends only on the values assumed by the spins of $x$ on 
$U(i) = i + \{e_{0},\dots,e_{d-1}\}$. 
One of the crucial
features of this dynamics $P$ is that under Assumptions (A2) and (A3) the relation
\begin{equation}\label{prodoido}
P^{n}(x,\{y_{F} = \xi\}) = \prod\limits_{i \in F}P^{n}(\Theta_{i}x,\{y_{\mathbf{o}} = \xi_{i}\})	
\end{equation}
holds for every configuration $x$, finite volume configuration $(\xi_{i})_{i \in F}$ for some $F \subseteq \mathbb{T}^d$, and positive integer $n$.

\subsection{...and back}
According to Theorem \ref{stm}, every stationary measure for the PCA defined 
above can be associated to a Gibbs measure for the corresponding statistical mechanical model $\Phi$ defined by (\ref{int}). Next, we show that for the class of PCAs on trees we are dealing with, under suitable conditions, the converse is also valid.

\begin{teo}\label{drift}
Under the Assumption (A1)-(A3), let $\mu$ be a Gibbs measure for the interaction $\Phi$ defined by (\ref{int}), such that it is invariant under time translations, i.e., $\mu$ is a Gibbs measure that satisfies 
\[\mu(\omega_{m} \in B) = \mu(\omega_{m-1} \in B)\]
for each integer $m$ and each Borel subset $B$ of $\Omega_{0}$. Then, there is a stationary measure $\nu$ for the corresponding PCA
such that $\mu = \mu_{\nu}$.		
\end{teo}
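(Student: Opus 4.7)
The plan is to exploit the rooted-tree geometry to construct, for each integer $n$ and each $k \geq 0$, a finite volume $\Lambda_k \subset \mathbb{Z} \times \mathbb{T}^d$ whose DLR specification under $\Phi$ coincides \emph{exactly} with the one-step-ahead PCA path density. Concretely, I would use the \emph{pyramid}
\[
\Lambda_k \;=\; \bigl\{(m,i) \,:\, n \leq m \leq n+k,\ |i| \leq k-(m-n)\bigr\},
\]
where $|i|$ denotes the depth of $i$ in $\mathbb{T}^d$. Its base is $\{n\} \times G_k$ with $G_k = \{i : |i| \leq k\}$ and its apex is the single site $(n+k, \mathbf{o})$.

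First I would verify that the only interactions $A = \{\mathsf{y}\} \cup U(\mathsf{y})$ that intersect $\Lambda_k$ are those with $\mathsf{y} \in \Lambda_k$. For a potential \emph{side} term one would need $\mathsf{y} = (m,i) \notin \Lambda_k$ with $m \in [n+1, n+k]$ and a child $(m-1, j) \in \Lambda_k$: then $|j| = |i|+1 > k-(m-n)+1 = k-(m-1-n)$, forcing $(m-1,j) \notin \Lambda_k$, a contradiction. For a \emph{top} term, $\mathsf{y} = (n+k+1, i)$ with a child $(n+k, j) \in \Lambda_k$ requires $|j| \leq 0$, hence $j = \mathbf{o}$, which is impossible since $\mathbf{o}$ has no parent. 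This vanishing of boundary terms is the essential geometric observation enabled by the fact that $U(i)$ is the set of children of $i$ and that the pyramid closes to the root.

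Consequently, the specification on $\Lambda_k$ reads
\[
\gamma_{\Lambda_k}(\eta \mid \omega) \;=\; \frac{1}{Z_{\Lambda_k}(\omega)} \prod_{i \in G_k} p_i\bigl(\eta_{n,i} \,\big|\, \omega_{n-1, U(i)}\bigr) \prod_{m=n+1}^{n+k} \prod_{|i| \leq k-(m-n)} p_i\bigl(\eta_{m,i} \,\big|\, \eta_{m-1, U(i)}\bigr),
\]
which depends on $\omega$ only through the $\mathcal{F}_{<n}$-measurable data $\omega_{n-1,\cdot}$. Summing top-down (first over $\eta_{n+k, \mathbf{o}}$, then the spins at time $n+k-1$, and so on), the identity $\sum_{s \in S} p_i(s \mid \cdot) = 1$ collapses each layer to unity, giving $Z_{\Lambda_k}(\omega) \equiv 1$. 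Combining the DLR equation for $\mu$ with a further conditional expectation on $\mathcal{F}_{<n}$ (which leaves the above right-hand side unchanged) and marginalising out $\eta_{n+1}, \ldots, \eta_{n+k}$ as well as $\eta_{n, G_k \setminus F}$, I obtain, for every finite $F \subset G_k$ and $\xi \in \{-1,+1\}^F$,
\[
\mu\bigl(\omega_{n,F} = \xi \,\big|\, \mathcal{F}_{<n}\bigr) \;=\; \prod_{i \in F} p_i\bigl(\xi_i \,\big|\, \omega_{n-1, U(i)}\bigr) \;=\; P\bigl(\omega_{n-1}, \{\omega_{n,F} = \xi\}\bigr).
\]
Letting $k \to \infty$ identifies $\mu$ as a time-Markov chain with transition kernel $P$. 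Setting $\nu$ equal to the (time-translation invariant) one-time marginal of $\mu$, stationarity $\nu = \nu P$ is then automatic, and $\mu = \mu_\nu$ follows by matching finite-dimensional distributions through \eqref{muu}.

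The creative core of the argument is \emph{locating} the pyramid and checking that both boundary types vanish; the rest is a mechanical consequence of DLR combined with $\sum_s p_i(s|\cdot) = 1$. It is essential that each $p_i$ depends on the \emph{children} of $i$, not on its parent: this is precisely what allows the apex of the pyramid to escape the top boundary and makes the construction work on trees, where standard variational arguments fail.
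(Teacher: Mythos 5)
Your proposal is correct and follows essentially the same route as the paper: the paper's proof in Appendix A.2 uses exactly this space--time pyramid $\Lambda = \bigcup_{l=0}^{N}\{m+l\}\times\{j : d(\mathbf{o},j)\le N-l\}$, observes that the children-only dependence kills all boundary interaction terms so that the partition function telescopes to $1$, marginalises the upper layers to obtain $\mu(\omega_m'\in B\mid\mathscr{F}_{<m})(\omega)=P(B\mid\omega_{m-1})$, and then defines $\nu$ as the time-invariant one-time marginal and concludes by Kolmogorov consistency. Your explicit verification that both the side and top boundary terms vanish is the same geometric observation the paper relies on (stated there as assumption (A2') in the general remark), so there is nothing to add.
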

Therefore, thanks to Theorem~\ref{drift} the study of the ergodicity of the PCA can be closely related to the study the uniqueness of the Gibbs measure on space-time associated to it.
\begin{remark}
In Appendix \ref{app1}, we give a more general proof for Theorem~\ref{drift}. It actually holds for any PCA on $\Omega_0 = S^{\mathsf{V}}$, where $S$ is a nonempty finite set and $\mathsf{V}$ is a (locally finite) infinite rooted tree, satisfying $(A1)$ and
\begin{enumerate}
\item[(A2')] Let $d:  \mathsf{V}  \times \mathsf{V} \rightarrow \mathbb{R}$ be the distance function that assigns to each pair $(i,j)$ of vertices the length of the unique path connecting them. 
Corresponding to each point $i$ that belongs to $\mathsf{V}$ the set $U(i)$ is a finite set such that 
\begin{equation}\label{pcatree}
U(i) \subseteq \{j \in \ \mathsf{V} : d(\mathbf{o},i) < d(\mathbf{o},j)\}.
\end{equation} 
\end{enumerate}
\end{remark}

\section{Conditions for ergodicity for PCA on trees}\label{ENE}
\label{s:ergo}
In this section we will present some results regarding sufficient conditions for ergodicity for the class of PCAs described previously. Note that equation (\ref{prodoido}) implies that the probability distributions of the spins at time $n$ are independent, so, this suggests that the typical stationary measures we have to look for are product measures. This remark leads us to state a lemma regarding the 
 characterization of stationary Bernoulli product measures, whose proof is given in Appendix \ref{app2}.
\begin{lemma}\label{lem:statsol}
A Bernoulli product measure $\nu = \mathsf{Bern(p)}^{\otimes \mathbb{T}^{d}}$ with parameter  $\mathsf{p} \in [0,1]$, 
is a stationary measure for $P$ if and only if 
\begin{equation}\label{vemmonstro}
\int p_{\mathbf{o}}(+1|x) \nu(dx) = \mathsf{p}
\end{equation}
i.e. if and only if
\begin{equation}\label{bodybuilder}
\sum\limits_{l=0}^{d}(-1)^{l}\left[\sum\limits_{\substack{I \subseteq \{0,\dots,d-1\} \\ |I| = l}}\left(\sum\limits_{\substack{\xi \in \{-1,+1\}^{d} \\ \xi_{k}= -1\; \textnormal{for all}\; k \in I}}(-1)^{\#\{m:\xi_{m}=-1\}}p_{\mathbf{o}}(+1|\xi)\right)\right]\mathsf{p}^{d-l} = \mathsf{p}.	
\end{equation}
Moreover, the probability to find the spin $+1$ at the root of $\mathbb{T}^{d}$ after $n+1$ steps of this dynamics starting from the configuration $x$ can be written as
\begin{eqnarray}\label{bodybuilder2}
&&P^{n+1}(x,\{y_{\mathbf{o}} = +1\})\\
\nonumber\\
&&=\sum\limits_{l=0}^{d}(-1)^{l}\left[\sum\limits_{\substack{I \subseteq \{0,\dots,d-1\} \\ |I| = l}}\left(\sum\limits_{\substack{\xi \in \{-1,+1\}^{d} \\ \xi_{k}= -1\; \textnormal{for all}\; k \in I}}(-1)^{\#\{m:\xi_{m}=-1\}}p_{\mathbf{o}}(+1|\xi)\right)\prod_{k \in \{0,\dots,d-1\}\backslash I}P^{n}(\Theta_{e_{k}}x,\{y_{\mathbf{o}} = +1\})\right] \nonumber.	
\end{eqnarray}
\end{lemma}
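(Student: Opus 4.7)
The plan splits naturally into three pieces: the stationarity $\Leftrightarrow$ (\ref{vemmonstro}) equivalence, the rewriting (\ref{vemmonstro}) $\Leftrightarrow$ (\ref{bodybuilder}) as a polynomial identity in $\mathsf{p}$, and the recursion (\ref{bodybuilder2}).

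For the first piece, the forward direction is immediate: if $\nu$ is stationary, integrating $P(x,\{y_{\mathbf{o}}=+1\}) = p_{\mathbf{o}}(+1|x)$ against $\nu$ yields $\mathsf{p} = \int p_{\mathbf{o}}(+1|x)\nu(dx)$. For the converse, I would test the stationarity definition on an arbitrary cylinder event $\{y_F = \xi\}$ with $F\subseteq\mathbb{T}^{d}$ finite. The crucial observation is that the neighborhoods $U(i) = i + \{e_0,\dots,e_{d-1}\}$ are pairwise disjoint for distinct $i$, because in a tree every non-root vertex has a unique parent. Under the product measure $\nu$, the functions $x \mapsto p_{\mathbf{o}}(\xi_i|\Theta_i x)$, $i \in F$, depend on disjoint coordinate blocks and are therefore $\nu$-independent. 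Combining (\ref{prodoido}), (A3), shift-invariance of $\nu$, and the hypothesis (\ref{vemmonstro}) then gives
\begin{equation*}
\int P(x,\{y_F = \xi\})\,\nu(dx) = \prod_{i \in F} \int p_{\mathbf{o}}(\xi_i|x)\,\nu(dx) = \prod_{i \in F}\mathsf{p}^{(1+\xi_i)/2}(1-\mathsf{p})^{(1-\xi_i)/2} = \nu(\{y_F = \xi\}),
\end{equation*}
which extends to the full Borel $\sigma$-algebra on $\Omega_0$.

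For the second piece, set $g(\xi) := p_{\mathbf{o}}(+1|\xi)$ and $N(\xi) := \#\{k:\xi_k=-1\}$. By (A2), $\int g\,d\nu = \sum_\xi g(\xi)\,\mathsf{p}^{d-N(\xi)}(1-\mathsf{p})^{N(\xi)}$. To match this with the left-hand side of (\ref{bodybuilder}), I would swap the two outer sums in that expression; for fixed $\xi$, the subsets $I$ contributing are precisely $I \subseteq \{k:\xi_k = -1\}$, so the inner sum collapses via the binomial theorem to $\sum_{l=0}^{N(\xi)}(-1)^l\binom{N(\xi)}{l}\mathsf{p}^{d-l} = (-1)^{N(\xi)}\mathsf{p}^{d-N(\xi)}(1-\mathsf{p})^{N(\xi)}$. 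The sign $(-1)^{N(\xi)}$ already present in (\ref{bodybuilder}) then cancels, reproducing $\int g\,d\nu$, so (\ref{bodybuilder}) is just (\ref{vemmonstro}) in disguise.

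For the recursion (\ref{bodybuilder2}), I would use the Markov property together with (\ref{prodoido}): conditioning on the time-$n$ configuration, the spins at time $n$ located at $e_0,\dots,e_{d-1}$ are independent with marginals $\mathsf{q}_k := P^n(\Theta_{e_k}x,\{y_{\mathbf{o}} = +1\})$, so
\begin{equation*}
P^{n+1}(x,\{y_{\mathbf{o}}=+1\}) = \sum_\xi g(\xi)\prod_{k:\xi_k = +1}\mathsf{q}_k\prod_{k:\xi_k = -1}(1-\mathsf{q}_k).
\end{equation*}
The combinatorial identity used in the second piece generalizes verbatim with site-dependent parameters $\mathsf{q}_k$ replacing the common $\mathsf{p}$, yielding (\ref{bodybuilder2}). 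The main conceptual step is the disjointness-of-children argument in the converse direction of the first piece; everything else reduces to the single binomial identity $\sum_{I \subseteq \{k:\xi_k=-1\}}(-1)^{|I|}\prod_{k \notin I}\mathsf{q}_k = (-1)^{N(\xi)}\prod_{k:\xi_k=+1}\mathsf{q}_k\prod_{k:\xi_k=-1}(1-\mathsf{q}_k)$ applied in two settings.
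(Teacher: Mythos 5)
Your proof is correct, and it differs from the paper's in two respects worth noting. The paper's own argument (Appendix B) reduces everything to the same combinatorial identity, namely equation (\ref{ixala}) relating $\sum_{\xi}a(\xi)\prod_{k}\mu(x_{e_k}=\xi_k)$ to the inclusion--exclusion form, but proves that identity by induction on $d$, peeling off the last coordinate in a fairly long computation; you instead prove it directly by exchanging the order of summation and collapsing the inner sum over $I\subseteq\{k:\xi_k=-1\}$ via the binomial theorem, which is shorter and makes the cancellation mechanism transparent. Your site-dependent version with parameters $\mathsf{q}_k$ in place of a common $\mathsf{p}$ is exactly what (\ref{bodybuilder2}) requires, and coincides with the paper's step of substituting $\mu=P^{n}(x,\cdot)$ into the general identity together with (\ref{prodoido}). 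Second, you make explicit something the paper's appendix leaves unaddressed: the equivalence between stationarity of $\nu$ and the single-site condition (\ref{vemmonstro}). Your converse argument --- the children sets $U(i)$ of distinct vertices are pairwise disjoint on a tree, so the factors $x\mapsto p_{\mathbf{o}}(\xi_i|\Theta_i x)$ depend on disjoint coordinate blocks and are independent under the product measure, whence $\int P(x,\{y_F=\xi\})\,\nu(dx)$ factorizes and a $\pi$-system argument finishes the job --- is the right one, and it correctly isolates where the tree structure and assumption (A2) enter (on $\mathbb{Z}^d$ the neighborhoods would overlap and the factorization would fail). Each approach buys something: the paper's induction is self-contained and mechanical, while your direct expansion is cleaner and supplies the missing measure-theoretic half of the ``if and only if.''
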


\subsection{Ergodicity and examples}
From now on, we will abbreviate $+1$ by $+$ (resp. $-1$ by $-$). 
In the first theorem we prove ergodicity results for the line and the binary trees,  while in the second theorem we prove ergodicity and non-ergodicity results for the 3-ary trees.  
\begin{teo}\label{Ergo_1}
Let us consider a PCA with transition probabilities satisfying (A1)-(A3). Then, we have the following results. 
\begin{enumerate}
\item[(a)] If $d=1$, then the PCA dynamics is ergodic.  The unique stationary measure is a Bernoulli product measure with parameter
\begin{equation}\label{1dstat}
\mathsf{p}=\frac{p_{\mathbf{o}}(+|-)}{p_{\mathbf{o}}(-|+) + p_{\mathbf{o}}(+|-)}.    
\end{equation}

\item[(b)] Let $d=2$ and the transition probabilities being symmetric under spin-flip, i.e., the equality $p_{\mathbf{o}}(s|x)=p_{\mathbf{o}}(-s|-x)$ holds for every spin $s$ and each configuration $x$. Then the PCA dynamics is ergodic, where its unique stationary measure is $\mathsf{Bern}\left(\frac{1}{2} \right)^{\otimes \mathbb{T}^{2}}$. 
\end{enumerate}
\end{teo}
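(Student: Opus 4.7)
The strategy is to reduce both statements to contraction of the recursion \eqref{bodybuilder2} for the one-point marginal $f_n(x) := P^n(x, \{y_{\mathbf{o}} = +1\})$. Indeed, by the product formula \eqref{prodoido}, for any finite $F \subset \mathbb{T}^d$ and any $\xi \in \{-1,+1\}^F$ one has $P^n(x, \{y_F = \xi\}) = \prod_{i \in F} P^n(\Theta_i x, \{y_{\mathbf{o}} = \xi_i\})$; so if $f_n(x) \to \mathsf{p}$ uniformly in $x$, every cylinder probability converges uniformly to its Bernoulli$(\mathsf{p})$ value, which is precisely ergodicity together with identification of the limit as the corresponding product measure.

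For part (a), the $d=1$ instance of \eqref{bodybuilder2} reads
\[
f_{n+1}(x) = \bigl[p_{\mathbf{o}}(+|+) - p_{\mathbf{o}}(+|-)\bigr]\, f_n(\Theta_{e_0} x) + p_{\mathbf{o}}(+|-),
\]
an affine iteration with contraction rate $|p_{\mathbf{o}}(+|+) - p_{\mathbf{o}}(+|-)| < 1$ granted by nondegeneracy (A1); its unique fixed point is exactly \eqref{1dstat}, and iterating yields $\sup_x |f_n(x) - \mathsf{p}| \to 0$ exponentially fast.

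For part (b), write $A = p_{\mathbf{o}}(+|++)$, $B = p_{\mathbf{o}}(+|+-)$, $C = p_{\mathbf{o}}(+|-+)$, $D = p_{\mathbf{o}}(+|--)$, and abbreviate $\alpha = f_n(\Theta_{e_0} x)$, $\beta = f_n(\Theta_{e_1} x)$. Expanding \eqref{bodybuilder2} for $d=2$ gives
\[
f_{n+1}(x) = (A-B-C+D)\,\alpha\beta + (B-D)\,\alpha + (C-D)\,\beta + D.
\]
The key observation is that spin-flip symmetry forces $A+D=1$ and $B+C=1$, so the bilinear coefficient $A-B-C+D$ vanishes and the recursion becomes genuinely linear. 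Substituting $g_n := f_n - \tfrac{1}{2}$ and using $D = 1-A$, a short computation yields
\[
g_{n+1}(x) = (A+B-1)\,g_n(\Theta_{e_0} x) + (A-B)\,g_n(\Theta_{e_1} x).
\]
By the elementary identity $|u|+|v| = \max(|u+v|,|u-v|)$, the sum of the absolute values of the two coefficients equals $\max(|2A-1|,|2B-1|)$, which is strictly less than $1$ by nondegeneracy. Hence $\sup_x |g_n(x)|$ decays geometrically, and $\mathsf{p} = \tfrac{1}{2}$ (also the unique Bernoulli fixed point provided by Lemma \ref{lem:statsol}) is the limit, so the reduction of the first paragraph concludes. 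The main subtlety is precisely this cancellation of the quadratic term under spin-flip symmetry; without that structural hypothesis one faces a genuinely nonlinear iteration for which uniform contraction of the marginal $f_n$ would require tracking joint distributions rather than only the one-point probabilities.
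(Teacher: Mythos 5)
Your proposal is correct and follows essentially the same route as the paper: reduce to the one-point marginal recursion via the product formula \eqref{prodoido}, note that for $d=1$ the recursion is affine with contraction rate $|p_{\mathbf{o}}(+|+)-p_{\mathbf{o}}(+|-)|<1$, and for $d=2$ that spin-flip symmetry kills the bilinear term so that the sum of the absolute values of the remaining coefficients is $\max(|2A-1|,|2B-1|)<1$. Your sup-norm contraction of the centered quantity $g_n=f_n-\tfrac12$ replaces the paper's explicit closed-form expansion over words in $\{0,1\}^{n-1}$, and in fact makes explicit the bound $|1-a-b|+|b-a|<1$ that the paper's equation \eqref{arehoo} uses without comment.
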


\begin{teo}\label{Th:Ergo_2}
\label{axe3} Let $d=3$ and let the transition probabilities be symmetric under spin-flip.
Denote by $\alpha:=p_{\mathbf{o}}(+|+++)$ and $\gamma:=p_{\mathbf{o}}(+|-++) + p_{\mathbf{o}}(+|+-+) + p_{\mathbf{o}}(+|++-)$. 
\noindent
Then the PCA transition rule is 
\begin{enumerate}
\item[(a)] ergodic, if $\alpha$ and $\gamma$ satisfy
\begin{enumerate}
\item[(i)] $1+\alpha-\gamma = 0$, or 
\item[(ii)] the PCA dynamics is attractive and $1+\alpha-\gamma \neq 0$ and $3\alpha+\gamma\leq 5$, or
\item[(iii)] the PCA dynamics is repulsive and $1+\alpha-\gamma \neq 0$ and $3\alpha+\gamma\geq 1$. 
\end{enumerate}
In this case the unique stationary measure is given by $\mathsf{Bern}\left(\frac{1}{2} \right)^{\otimes\mathbb{T}^3}$.
\item[(b)] non-ergodic, if $\alpha$ and $\gamma$ satisfy
\begin{enumerate}
\item[(i)] $1+\alpha-\gamma \neq  0$ and $3\alpha+\gamma > 5$. In this case, 
we have several stationary Bernoulli product measures  with parameter \[\mathsf{p} \in \left \{ \frac{1}{2},  \frac{1 + \sqrt{1+\frac{4(1-\alpha)}{1+\alpha-\gamma}}}{2}, \frac{\sqrt{1-\frac{4(1-\alpha)}{1+\alpha-\gamma}}}{2} \right \}, \]
or 
\item[(ii)] the PCA dynamics is repulsive and $1+\alpha-\gamma \neq  0$ and $3\alpha+\gamma < 1$.
\end{enumerate}
\end{enumerate}
\end{teo}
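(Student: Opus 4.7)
The plan is to analyse everything through the scalar map $g(p) := \mathbb{E}_{X \sim \mathsf{Bern}(p)^{\otimes 3}}[p_{\mathbf{o}}(+|X)]$, which by Lemma~\ref{lem:statsol} characterises Bernoulli stationarity via $g(p)=p$ and by (\ref{bodybuilder2}) drives the one-site iteration starting from a spatially constant initial configuration. Under spin-flip symmetry a direct expansion gives $g(p) = 2\beta p^{3} - 3\beta p^{2} + (3\alpha-\gamma)p + (1-\alpha)$ with $\beta := 1+\alpha-\gamma$, so that the fixed-point equation $g(p)=p$ factors as $(2p-1)[\beta p^{2}-\beta p-(1-\alpha)]=0$. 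The Bernoulli product stationary measures are therefore exhausted by $\mathsf{Bern}(1/2)^{\otimes\mathbb{T}^{3}}$ plus (when $\beta\ne 0$) two more product measures with parameters $p_{\pm}=(1\pm\sqrt{1+4(1-\alpha)/\beta})/2$; since $p_{+}+p_{-}=1$ and $p_{+}p_{-}=-(1-\alpha)/\beta$, these extra roots sit in $(0,1)$ exactly when $\beta<0$ and $5-3\alpha-\gamma<0$, i.e.\ under the hypothesis of (b)(i), which is then immediately non-ergodic by Lemma~\ref{lem:statsol}.

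For the ergodic statements in (a)(ii) and (a)(iii) the central tool is the monotone-sandwich argument, which is available because by (\ref{prodoido}) the PCA started from a translation-invariant initial configuration $\delta_{+}$ or $\delta_{-}$ produces a Bernoulli product measure at every time, with parameter $q_{n}^{\pm}=g^{n}(1)$ or $g^{n}(0)$. In (a)(ii), attractivity forces $g$ to be non-decreasing, so $q_{n}^{+}\downarrow p^{+}$ and $q_{n}^{-}\uparrow p^{-}$ with $p^{\pm}$ the extremal fixed points of $g$ in $[0,1]$; the quadratic analysis above shows that under $\beta\ne 0$ and $3\alpha+\gamma\le 5$ the only fixed point in $[0,1]$ is $1/2$, hence $p^{+}=p^{-}=1/2$, and the sandwich $P^{n}\delta_{-}\le P^{n}\mu_{0}\le P^{n}\delta_{+}$ (valid for every initial $\mu_{0}$) forces $P^{n}\mu_{0}\to\mathsf{Bern}(1/2)^{\otimes\mathbb{T}^{3}}$. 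Case (a)(iii) is handled by the same argument applied to $P\circ P$, which is attractive whenever $P$ is repulsive: the obstruction now consists of fixed points of $g\circ g$ in $[0,1]$, i.e.\ fixed points of $g$ together with period-$2$ orbits of $g$, and an entirely parallel factorisation $(2p-1)[\beta(p^{2}-p)+\alpha]=0$ of $g(p)=1-p$ shows that non-trivial period-$2$ roots lie in $(0,1)$ iff $\beta>0$ and $3\alpha+\gamma<1$, which is excluded by (a)(iii). The same factorisation under the repulsive hypothesis of (b)(ii) produces a genuine period-$2$ orbit $\{p_{+},p_{-}\}\subset(0,1)$, and since $P\,\mathsf{Bern}(p_{+})^{\otimes\mathbb{T}^{3}}=\mathsf{Bern}(p_{-})^{\otimes\mathbb{T}^{3}}$, the arithmetic mean $\tfrac{1}{2}(\mathsf{Bern}(p_{+})^{\otimes\mathbb{T}^{3}}+\mathsf{Bern}(p_{-})^{\otimes\mathbb{T}^{3}})$ is a $P$-stationary measure distinct from $\mathsf{Bern}(1/2)^{\otimes\mathbb{T}^{3}}$, yielding non-ergodicity.

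The most delicate case is (a)(i), where $\beta=0$ but no monotonicity is assumed, so the sandwich is unavailable. I would linearise the recursion (\ref{bodybuilder2}) about $1/2$: set $\tilde f_{n}(x) := 2P^{n}(x,\{y_{\mathbf{o}}=+\})-1$ and observe that spin-flip symmetry makes $p_{\mathbf{o}}(+|\cdot)-1/2$ an odd function on $\{-1,+1\}^{3}$, so its Walsh expansion contains only odd-cardinality characters, with three linear coefficients $c_{k}=4(\alpha-p_{\mathbf{o}}(+|\xi^{(k)}))$ (where $\xi^{(k)}\in\{-1,+1\}^{3}$ has a single minus at position $k$) and a single cubic coefficient equal to $2\beta$. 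When $\beta=0$ the cubic term vanishes and (\ref{bodybuilder2}) collapses to the purely linear recursion
\begin{equation*}
\tilde f_{n+1}(x) \;=\; \tfrac{1}{4}\sum_{k=0}^{2} c_{k}\,\tilde f_{n}(\Theta_{e_{k}} x).
\end{equation*}
A short extremal argument, using the constraint $\sum_{k}p_{\mathbf{o}}(+|\xi^{(k)})=\gamma=1+\alpha$ (which is equivalent to $\beta=0$) together with strict nondegeneracy $p_{\mathbf{o}}(+|\cdot)\in(0,1)$, then gives $\sum_{k}|\alpha-p_{\mathbf{o}}(+|\xi^{(k)})|<1$ and hence $\sum_{k}|c_{k}|/4<1$; iterating yields $\|\tilde f_{n}\|_{\infty}\to 0$ geometrically. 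Feeding this pointwise decay into (\ref{prodoido}) and applying dominated convergence produces $P^{n}\mu_{0}(y_{F}=\xi)\to 2^{-|F|}$ for every finite $F\subset\mathbb{T}^{3}$, every $\xi\in\{-1,+1\}^{F}$ and every initial $\mu_{0}$, establishing ergodicity with unique stationary measure $\mathsf{Bern}(1/2)^{\otimes\mathbb{T}^{3}}$. The hardest step of the whole proof is this $\ell^{1}$ contraction in the non-monotone regime: one has to identify the precise algebraic consequence of spin-flip symmetry plus $\beta=0$ on the transition kernel, and then prove the extremal estimate $\sum_{k}|\alpha-p_{\mathbf{o}}(+|\xi^{(k)})|<1$ under the joint constraint on the sum and range of the three numbers $p_{\mathbf{o}}(+|\xi^{(k)})$.
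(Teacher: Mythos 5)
Your overall strategy coincides with the paper's: reduce everything to the scalar cubic $F(\pp)=2\beta\pp^{3}-3\beta\pp^{2}+(3\alpha-\gamma)\pp+(1-\alpha)$ with $\beta=1+\alpha-\gamma$, use Lemma~\ref{lem:statsol} and (\ref{bodybuilder2}) to identify stationary Bernoulli parameters with fixed points and the root marginal of $P^{n}\delta_{\pm}$ with $F^{n}(1),F^{n}(0)$, and then run a monotone sandwich. Your treatment of (b)(i), (a)(ii) and (a)(i) matches the paper's, and in two places you are actually more careful than the source: in (a)(i) the paper asserts without comment that the linear recursion contracts, whereas you isolate the needed estimate $\sum_{k}|\alpha-p_{\mathbf{o}}(+|\xi^{(k)})|<1$ and correctly observe that it follows from $\sum_{k}p_{\mathbf{o}}(+|\xi^{(k)})=1+\alpha$ together with nondegeneracy (this is true and provable by the case analysis on the signs of $\alpha-p_{\mathbf{o}}(+|\xi^{(k)})$); and in (b)(ii) your endgame, exhibiting the stationary mixture $\tfrac12(\mathsf{Bern}(\mathsf{q}_{+})^{\otimes\mathbb{T}^{3}}+\mathsf{Bern}(\mathsf{q}_{-})^{\otimes\mathbb{T}^{3}})$, is a clean alternative to the paper's argument that the even and odd subsequences of $P^{n}(x_{+},\cdot)$ converge to different limits.

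The one genuine gap is in (a)(iii). You claim that the fixed points of $F\circ F$ in $[0,1]$ are exhausted by the fixed points of $F$ together with the roots of $F(\pp)=1-\pp$. A degree count shows this is not automatic: $F\circ F(\pp)-\pp$ has degree $9$, while $F(\pp)-\pp$ and $F(\pp)-(1-\pp)$ account for at most five roots (they share $\pp=\tfrac12$); the remaining four roots could in principle form two $2$-cycles $\{\pp,\mathsf{q}\}$ with $\mathsf{q}\neq 1-\pp$ lying in $[0,1]$, and nothing in your argument excludes them. The paper sidesteps this entirely in the ergodic repulsive case by conjugating with the global spin flip: setting $q_{i}(\cdot|x)=p_{i}(\cdot|-x)$ produces an attractive, spin-flip symmetric PCA with $\alpha'=1-\alpha$, $\gamma'=3-\gamma$, hence $3\alpha'+\gamma'=6-(3\alpha+\gamma)\leq 5$, so case (a)(ii) applies to $Q$, and $P^{n}(x,\cdot)=Q^{n}((-1)^{n}x,\cdot)$ transfers ergodicity back to $P$. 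Your route can be repaired without enumerating the fixed points of $F\circ F$: under (a)(iii) one has $\beta(\pp^{2}-\pp)+\alpha>0$ on $[0,1]$, so $F(\pp)-(1-\pp)$ has the sign of $2\pp-1$, and combining this with the strict monotone decrease of $F$ and the symmetry $F(1-\pp)=1-F(\pp)$ yields $F^{2}(\pp)>\pp$ on $[0,\tfrac12)$ and $F^{2}(\pp)<\pp$ on $(\tfrac12,1]$ directly, which forces $F^{2n}(\mathsf{q})\to\tfrac12$; this is exactly the sign manipulation the paper performs in its proof of (b)(ii). Either supply that sign argument (and verify $F'\leq 0$ on $[0,1]$ in the repulsive regime, which you use but do not check) or switch to the conjugation trick; as written, the step ``the obstruction consists of fixed points of $g$ together with period-$2$ roots of $g(\pp)=1-\pp$'' is unjustified.
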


\medskip

\begin{remark}
In the last case (Theorem \ref{Th:Ergo_2} $(b)$-$(ii)$),  we can actually prove that the PCA oscillates between two Bernoulli product measures with distinct parameters $\mathsf{p}$. Further details are presented in Section \ref{proof:repulsive}.
\end{remark}


\noindent
Before we pass to the proofs of the theorems we will discuss some examples.

\subsubsection{Example 1}
\label{Ergo_2}
For $d=3$ and $\beta>0$, let us consider the PCA with transition probabilities given by
\begin{equation}\label{transprob}
p_i(s|x) = \frac{1}{2} \left ( 1 + s \tanh \left (\beta \sum_{k=0}^2 J_k x_{i+e_{k}}\right )\right)	
\end{equation}
where $J_0,J_1$ and $J_2 \in \mathbb{R}$. 
Hence, for suitable values of the constants, there exists a critical $\beta_c\in(0,\infty)$ such that the PCA is ergodic for 
$\beta \leq \beta_c$ and non-ergodic otherwise. In fact the following result holds.


\begin{prop}\label{prop:Ex1}
Suppose that one of the following conditions on the coupling constants $J_0, J_1, J_2$ is fulfilled.
\begin{itemize}
\item[(C1)] $J_0,J_1,J_2 >0$ and $J_{0} \leq J_{1} + J_{2}$, $J_{1} \leq J_{0} + J_{2}$, and $J_{2} \leq J_{0} + J_{1}$.
\item[(C2)] $J_0,J_1,J_2< 0$ and $J_{0} \geq J_{1} + J_{2}$, $J_{1} \geq J_{0} + J_{2}$, and $J_{2} \geq J_{0} + J_{1}$.
\end{itemize}

Let $\alpha, \gamma$ be defined as in Theorem \ref{Th:Ergo_2}, and let function $f:\mathbb{R}_+\rightarrow \mathbb{R}$ be defined as
\[
f(\beta)=3\alpha + \gamma.
\]
Then, there exists $\beta_c \in (0,\infty)$ 
 depending on the constants $J_0,J_1,J_2$ such that for
\begin{itemize}
\item[(a)] $\beta \leq \beta_c$ the PCA dynamics associated to the local transition probabilities given by \eqref{transprob} is ergodic, and
\item[(b)] $\beta > \beta_c$ the dynamics is non-ergodic.
\end{itemize}
\end{prop}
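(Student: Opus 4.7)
The plan is to reduce the claim directly to Theorem~\ref{Th:Ergo_2} by analyzing the function $f(\beta)=3\alpha(\beta)+\gamma(\beta)$ explicitly and showing it crosses the relevant threshold exactly once. The first step is a quick sanity check: the kernel \eqref{transprob} clearly satisfies (A1)--(A3), and since $\tanh$ is odd one has $p_i(-s|-x)=p_i(s|x)$, so the spin-flip symmetry required in Theorem~\ref{Th:Ergo_2} is automatic. Also, under (C1) the argument of $\tanh$ in $p_{\mathbf{o}}(+|x)$ is coordinatewise nondecreasing in $x$, so by the Dobrushin/Louis criterion recalled after Definition~\ref{def_attr} the dynamics is attractive; under (C2) the same argument with reversed signs gives repulsiveness.

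Next I compute $f(\beta)$ explicitly. Writing $\sigma:=J_0+J_1+J_2$ and $\sigma_k:=\sigma-2J_k$ for $k=0,1,2$, a short calculation gives
\begin{equation*}
f(\beta)=3+\tfrac{3}{2}\tanh(\beta\sigma)+\tfrac{1}{2}\bigl[\tanh(\beta\sigma_0)+\tanh(\beta\sigma_1)+\tanh(\beta\sigma_2)\bigr].
\end{equation*}
Under (C1) the triangle-type conditions force $\sigma,\sigma_0,\sigma_1,\sigma_2\ge 0$ (with $\sigma>0$ strictly), hence $f$ is nondecreasing and in fact strictly increasing on $\mathbb{R}_+$, with $f(0)=3$ and $f(\beta)\to 3+\tfrac{3}{2}+\tfrac{3}{2}=6$ as $\beta\to\infty$ (the limit is still $>5$ even if some $\sigma_k=0$). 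By continuity there is a unique $\beta_c\in(0,\infty)$ with $f(\beta_c)=5$. Under (C2) the reversed inequalities give $\sigma,\sigma_0,\sigma_1,\sigma_2\le 0$, $f$ is strictly decreasing from $3$ to $0$, and there is a unique $\beta_c\in(0,\infty)$ with $f(\beta_c)=1$.

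To conclude, in case (C1) I apply Theorem~\ref{Th:Ergo_2}(a)(ii) for $\beta\le\beta_c$ and Theorem~\ref{Th:Ergo_2}(b)(i) for $\beta>\beta_c$; in case (C2) I apply Theorem~\ref{Th:Ergo_2}(a)(iii) and Theorem~\ref{Th:Ergo_2}(b)(ii) respectively. The one small side condition to check is that $1+\alpha-\gamma\ne 0$ whenever $\beta>0$: a direct computation gives
\begin{equation*}
1+\alpha-\gamma=\tfrac{1}{2}\tanh(\beta\sigma)-\tfrac{1}{2}\bigl[\tanh(\beta\sigma_0)+\tanh(\beta\sigma_1)+\tanh(\beta\sigma_2)\bigr],
\end{equation*}
which, by strict subadditivity of $\tanh$ on positive arguments (or its reverse for negative arguments) together with $\sigma=\tfrac{1}{2}(\sigma_0+\sigma_1+\sigma_2)+\tfrac{1}{2}\sigma$, is nonzero for every $\beta>0$ under (C1) or (C2).

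The main obstacle, and the only place where the geometric assumptions in (C1)/(C2) are essential, is controlling the sign pattern of the $\sigma_k$'s: without the triangle-type inequalities one of the $\tanh$ terms would run to $-1$ in the limit and $f$ would not reach $5$ (respectively, would not drop below $1$), so no finite critical $\beta_c$ would exist. Everything else is a monotonicity/continuity argument feeding into Theorem~\ref{Th:Ergo_2}.
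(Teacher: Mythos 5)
Your proposal is correct and follows essentially the same route as the paper: verify spin-flip symmetry and attractiveness/repulsiveness, show $f(\beta)=3\alpha+\gamma$ increases from $3$ past the threshold $5$ (resp.\ decreases below $1$) to obtain a unique $\beta_c$, check $1+\alpha-\gamma\neq 0$ for $\beta>0$, and invoke Theorem~\ref{Th:Ergo_2}. The only (immaterial) difference is that you establish $1+\alpha-\gamma\neq 0$ via strict subadditivity of $\tanh$ on positive arguments, whereas the paper shows $g(\beta)=1+\alpha-\gamma$ satisfies $g(0)=0$ and $g'(\beta)<0$ by a $\cosh^{-2}$ comparison; you also treat (C2) explicitly where the paper declares it straightforward.
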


\begin{remark2}
 Note that, thanks to the spin-flip symmetry of the probabilities (\ref{transprob}), we can apply Theorem~\ref{Th:Ergo_2}. Moreover, we remark that the lattice model equivalent to (\ref{transprob})
has been extensively studied  in  \cite{louis}.
\end{remark2}
\begin{remark2}
If condition $(C1)$ holds, then $\beta_c=f^{-1}(5)$. Otherwise, if $(C2)$ holds, then $\beta_c=f^{-1}(1)$. In particular, if $J_0=J_1=J_2=J\in\mathbb{R} \backslash \{0\}$, it follows that $\beta_c=\frac{1}{2|J|}\log(1+2^{2/3})$. In \cite{louis2} a similar ferromagnetic PCA has been studied on $\mathbb{Z}^d$ where in the particular case $d=2$ the value of $\beta_c$ is given by $\beta_c = \frac{1}{2J}\log(1+\sqrt{2})$.
\end{remark2}
\medskip
\subsubsection{Example 2}
\label{Ergo_4}
Let us consider the PCA on the $3$-ary tree defined as follows. Suppose that at each step every spin assume the value corresponding to the majority
among their children. After that each spin make an error with a probability $\epsilon \in (0,1)$ independently of each other, that is, if the spin at the site $i$ 
assumed the value $+1$ (resp. $-1$), then it will change to $-1$ (resp. $+1$) with probability $\epsilon$ and keep
the value $+1$ (resp. $-1$) with probability $1-\epsilon$. Note that such a system follows a CA dynamics, namely the majority rule, with the addition of a noise. For a more detailed study of this kind of PCAs, see \cite{MST}.

In the example described above, we have
\[
p_{\mathbf{o}}(+|+++)=p_{\mathbf{o}}(+|++-)=p_{\mathbf{o}}(+|+-+)=p_{\mathbf{o}}(+|-++)=1-\epsilon. 
\]
This PCA  has been first studied in \cite{dobrushin}, where non-ergodicity has been proven only for sufficiently small $\epsilon$. In the next proposition we fully characterize its behavior for the whole range of $\epsilon$.

\begin{prop}\label{prop:Ex2}
There exist two critical values $\epsilon_{c}^{(1)}=\frac{1}{6}$
and $\epsilon_{c}^{(2)}=\frac{5}{6}$ such that for every $\epsilon \in (0,1)$ 
\begin{itemize}
\item[(a)] the PCA dynamics is ergodic if $\epsilon_{c}^{(1)} \leq \epsilon \leq \epsilon_{c}^{(2)}$, and
\item[(b)] non-ergodic for $\epsilon \notin [\epsilon_{c}^{(1)}, \epsilon_{c}^{(2)}]$.
\end{itemize}
\end{prop}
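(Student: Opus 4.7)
The plan is to recognize Proposition \ref{prop:Ex2} as a direct corollary of Theorem \ref{Th:Ergo_2}, obtained by plugging the noisy-majority transition kernel into the parameters $(\alpha,\gamma)$ and reading off the cases. First I would check that the dynamics is symmetric under spin-flip, which is immediate since the noiseless majority rule commutes with the global sign change and the error probability $\epsilon$ does not distinguish $+$ from $-$. A direct computation then gives
\[
\alpha = 1-\epsilon, \qquad \gamma = 3(1-\epsilon), \qquad 1+\alpha-\gamma = 2\epsilon - 1, \qquad 3\alpha + \gamma = 6(1-\epsilon).
\]

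Next I would determine the monotonicity type of the map $x \mapsto p_{\mathbf{o}}(+|x)$. By construction it equals $1-\epsilon$ on configurations $(x_{e_{0}},x_{e_{1}},x_{e_{2}})$ with at least two $+$'s and equals $\epsilon$ on the remaining ones, so flipping any coordinate from $-$ to $+$ either leaves the value unchanged or moves it from $\epsilon$ to $1-\epsilon$. Hence the dynamics is attractive when $\epsilon<1/2$ and repulsive when $\epsilon>1/2$; at $\epsilon=1/2$ one has $1+\alpha-\gamma=0$, so Theorem \ref{Th:Ergo_2}(a)(i) already yields ergodicity.

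Finally I would partition the remaining parameter range. For $0<\epsilon<1/2$ (attractive), case (a)(ii) applies precisely when $6(1-\epsilon)\le 5$, i.e.\ $\epsilon\ge 1/6$, while case (b)(i) applies precisely when $\epsilon<1/6$. For $1/2<\epsilon<1$ (repulsive), case (a)(iii) applies precisely when $6(1-\epsilon)\ge 1$, i.e.\ $\epsilon\le 5/6$, while case (b)(ii) applies precisely when $\epsilon>5/6$. Piecing the four subranges together yields ergodicity on $[1/6,5/6]$ and non-ergodicity on its complement inside $(0,1)$, so $\epsilon_{c}^{(1)}=1/6$ and $\epsilon_{c}^{(2)}=5/6$ as claimed. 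Since Theorem \ref{Th:Ergo_2} does all the heavy lifting, there is no genuine obstacle here; the only nontrivial point is the attractive/repulsive dichotomy, which reduces to inspecting the eight boundary transition probabilities and observing that the threshold $\epsilon=1/2$ coincides exactly with the vanishing of $1+\alpha-\gamma$.
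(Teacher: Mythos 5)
Your proposal is correct and follows essentially the same route as the paper: verify spin-flip symmetry, compute $\alpha=1-\epsilon$, $\gamma=3(1-\epsilon)$, hence $1+\alpha-\gamma=2\epsilon-1$ and $3\alpha+\gamma=6(1-\epsilon)$, note attractiveness for $\epsilon<\tfrac{1}{2}$ and repulsiveness for $\epsilon>\tfrac{1}{2}$, and then read off the thresholds $\tfrac{1}{6}$ and $\tfrac{5}{6}$ from the cases of Theorem~\ref{Th:Ergo_2}. The only difference is that you spell out the monotonicity check in slightly more detail than the paper does.
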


\section{Discussion}
In this work we proved the correspondence between stationary measures for PCAs on infinite rooted trees and time-invariant Gibbs measures for a corresponding statistical mechanical model. As mentioned before, the proof of such correspondence is very general and can be applied for any PCA on a (locally finite) infinite rooted tree with finite single spin space $S$. The main implication of this fact is once we establish conditions for uniqueness of Gibbs measures for such a system, we guarantee the uniqueness of stationary distributions for the associated PCA, compare also \cite{Steiff}. On the other hand, the existence of multiple stationary measures implies on the phase transition in the statistical mechanical model. In this way
we provide a partial relationship between ergodicity and phase transition extending the results from \cite{pca}.

Restricting to the study of PCAs on a $d$-ary tree $\mathbb{T}^{d}$  with translation-invariant local transition probabilities with single spin space $S= \{-1,+1\}$, we were able to find ergodicity properties for such class of PCAs. The assumption that the choice of a local transition probability at a site $i$ only depends upon the values of the spins of the children of $i$ allowed us to derive several important properties, for instance, equations (\ref{prodoido}) and (\ref{bodybuilder2}). Equation (\ref{prodoido}) shows us that the probability distributions of the spins at time $n$ are independent, such fact lead us to characterize the stationary measures of such a system whose form are product measures. In this way, we naturally obtained a polynomial function $F$ defined on the interval $[0,1]$ whose expression is given by
  \begin{equation}
    \label{eq:functionF}
    F(\mathsf{p}) = \sum\limits_{l=0}^{d}(-1)^{l}\left[\sum\limits_{\substack{I \subseteq \{0,\dots,d-1\} \\ |I| = l}}\left(\sum\limits_{\substack{\xi \in \{-1,+1\}^{d} \\ \xi_{k}= -1\; \textnormal{for all}\; k \in I}}(-1)^{\#\{m:\xi_{m}=-1\}}p_{\mathbf{o}}(+1|\xi)\right)\right]\mathsf{p}^{d-l} 
  \end{equation}
such that, according to equation (\ref{bodybuilder}), a Bernoulli product measure with parameter $\mathsf{p}$ is a stationary measure for the PCA dynamics if and only if $\mathsf{p}$ is a fixed point of $F$. Furthermore, based on equation (\ref{bodybuilder2}), the convergence of $P^n(x,\{y_{\mathbf{o}}= +1\})$ for a shift-invariant configuration $x$ (that is, for $x$ that satisfies $\theta_i x = x$ for all $i$) can be studied in terms on the behavior of the iterations $F^n$, since the identity  $P^{n+1}(x,\{y_{\mathbf{o}}= +1\}) = F(P^n(x,\{y_{\mathbf{o}}= +1\}))$ holds.

  We applied the techniques described above in the cases where $d=1,2$, and $3$. For $d = 1$, we the PCA dynamics is ergodic and the unique stationary measure is a Bernoulli product measure with parameter $\mathsf{p}$ given by equation (\ref{1dstat}). Note that this case is equivalent to the study of a PCA on $\mathbb{N}$ where the choice of the value of the spin located at $i$ at time $n+1$  depends only on the value of the spin at $i+1$ at time $n$. Extensions of this result where $U(i)= \{i,i+1\}$ were extensively studied in \cite{dobrushin}, moreover, more recent generalizations that considers one-dimensional PCAs with general finite alphabets and characterizations of Markov stationary measures can be found in \cite{Casse1,Casse2}. For the cases $d=2$ and $d=3$ we assumed the invariance of the local transition probabilities under spin-flip in order to guarantee the existence of a stationary Bernoulli product measure (which has parameter $\frac{1}{2}$). Under this restriction, we obtained a full characterization the dynamics of PCAs with $d=2$, and $d=3$ with the additional hypothesis of attractiveness (resp. repulsiveness).

  For further generalizations, in order to drop the assumption of spin-flip symmetry and extend the results for any $d$, it is necessary to investigate the general properties of the polynomial function $F$ regarding its fixed points and the behavior of its iterates $F^n$. It is also worth investigating generalizations of PCAs from Examples 1 and 2. Note that Theorem \ref{stm} together with Dobrushin's uniqueness theorem implies that  for a PCA on $\mathbb{T}^d$ whose local transition probabilities are given by
\begin{equation}\label{transprob2}
p_i(s|x) = \frac{1}{2} \left ( 1 + s \tanh \left (\beta \sum_{k=0}^{d-1} J_k x_{i+e_{k}}\right )\right)	
\end{equation}
there is a unique stationary measure given by $\mathsf{Bern}(\frac{1}{2})^{\otimes \mathbb{T}^d}$ for $\beta$  small  enough, suggesting the ergodicity at high temperatures.

Another kind direction that should be considered in the future is the possibility of inclusion of finite alphabets other that $S = \{-1,+1\}$ and the possibility of influence of the state at the vertex $i$ at time $n$ on its state at time $n+1$, more precisely, the possibility of considering $U(i) = \{i, i+e_{\mathbf{o}}, \dots, i + e_{d-1}\}$. Such assumptions require a new approach once equations (\ref{prodoido}), (\ref{bodybuilder}) and (\ref{bodybuilder2}) would no longer be valid, so, one possible direction that should be chosen would be towards an extension of the results from \cite{Casse1,Casse2}.

\section{Proofs of Ergodicity results}\label{proofs}


\subsection{Proof of Theorem \ref{Ergo_1}}
\subsubsection{Case (a)}
\begin{proof}[]
Note that a  PCA on $\mathbb{T}^1$ is equivalent to a PCA model on $\mathbb{Z}_+$.
In order to simplify the computations, let us use $a$ and $b$ to denote $p_{\mathbf{o}}(+|+)$ and $p_{\mathbf{o}}(+|-)$, respectively. 
Since the local transition probabilities have positive rates, then, we have  $|a- b| < 1$. It follows that for each point $x$ in $\Omega_{0}$, we have
\begin{eqnarray*}
P^{n+1}(x,\{y_{\mathbf{o}} = +1\}) &=& \int P(z,\{y_{\mathbf{o}} = +1\}) P^{n}(x,dz) \\
&=& a\cdot P^{n}(x,\{y_{e_{0}} = +1\}) + b\cdot P^{n}(x,\{y_{e_{0}} = -1\}) \\
&=& (a - b)\cdot P^{n}(x,\{y_{e_{0}} = +1\}) + b \\
&=& (a - b)\cdot P^{n}(\Theta_{e_{0}}x,\{y_{\mathbf{o}} = +1\}) + b
\end{eqnarray*} 
for each positive integer $n$. Note that the relation above can also be obtained by means of equation (\ref{bodybuilder2}). Thus, the quantity above
can be expressed as
\begin{equation*}
P^{n}(x,\{y_{\mathbf{o}} = +1\}) = (a - b)^{n-1}\cdot p_{\mathbf{o}}(+1|\Theta_{\underbrace{e_{0}+\cdots+{e_{0}}}_{n-1\, \textnormal{times}}}x) + b \cdot \sum\limits_{k=0}^{n-2} (a - b)^{k}.	
\end{equation*}
It follows that for any initial configuration $x$, the probability $P^{n}(x,\{y_{\mathbf{o}} = +1\})$ converges to $\mathsf{p} = \frac{b}{1-(a - b)}$ as $n$ approaches infinity.
Therefore, using equation (\ref{prodoido}), we conclude that this PCA is ergodic, where its unique attractive stationary measure is $\mathsf{Bern(p)}^{\otimes \mathbb{T}^{1}}$.
\end{proof}

\subsubsection{Case (b)}\label{proof:caseb}
\begin{proof}
Let $a,b\in(0,1)$ defined by $a = p_{\mathbf{o}}(+|--) = 1 - p_{\mathbf{o}}(+|++)$ 
and $b = p_{\mathbf{o}}(+|-+) = 1 - p_{\mathbf{o}}(+|+-)$, respectively.  
Let us show that $\mathsf{Bern(\frac{1}{2})}^{\otimes \mathbb{T}^2}$, in fact, is the
unique attractive stationary measure, that is, for every initial configuration $x$ we have $P^n(x,\cdot) \rightarrow \mathsf{Bern(\frac{1}{2})}^{\otimes \mathbb{T}^{2}}$ as $n$ approaches infinity. According to equation (\ref{bodybuilder2}), we have
\begin{eqnarray*}
P^{n+1}(x,\{y_{\mathbf{o}}=+1\}) = (1-b-a) P^{n}(\Theta_{e_{0}}x,\{y_{\mathbf{o}}=+1\}) + (b- a) P^{n}(\Theta_{e_{1}}x,\{y_{\mathbf{o}}=+1\}) + a.	
\end{eqnarray*}
By induction, we can show that
\begin{eqnarray*}
P^{n}(x,\{y_{\mathbf{o}} = +1\}) &=& \sum\limits_{i \in \{0,1\}^{n-1}}(1-b -a)^{\#\{k:i_{k} = 0\}}(b -a)^{\#\{k:i_{k} = 1\}}P(\Theta_{i}x,\{y_{\mathbf{o}} = +1\}) \\
&& + a \sum_{l=0}^{n-2}\sum\limits_{i \in \{0,1\}^{l}}(1-b -a)^{\#\{k:i_{k} = 0\}}(b -a)^{\#\{k:i_{k} = 1\}}.
\end{eqnarray*}
Using the fact that for any real numbers $p$ and $q$, the relation
\[\sum\limits_{i \in \{0,1\}^{l}}p^{\#\{k:i_{k} = 0\}}q^{\#\{k:i_{k} = 1\}} = (p+q)^{l}\]
holds for every nonnegative integer $l$, it follows that 
\begin{equation}\label{arehoo}
P^{n}(x,\{y_{\mathbf{o}} = +1\}) = \sum\limits_{i \in \{0,1\}^{n-1}}(1-b -a)^{\#\{k:i_{k} = 0\}}(b -a)^{\#\{k:i_{k} = 1\}}P(\Theta_{i}x,\{y_{\mathbf{o}} = +1\}) 
+ a \sum_{l=0}^{n-2}(1-2a)^{l}.
\end{equation}
Since the absolute value of the first term of equation (\ref{arehoo}) is bounded by
\[\sum\limits_{i \in \{0,1\}^{n-1}}|1-b -a|^{\#\{k:i_{k} = 0\}}|b -a|^{\#\{k:i_{k} = 1\}} = (|1-b -a|+|b -a|)^{n-1},\]
then
\begin{equation}
\lim_{n \to \infty}P^{n}(x,\{y_{\mathbf{o}} = +1\})  = 	a \sum_{l=0}^{\infty}(1-2a)^{l} = \frac{1}{2}.
\end{equation}
Therefore, by means of equation (\ref{prodoido}), we conclude that  
$\mathsf{Bern(}\frac{1}{2}\mathsf{)}^{\otimes \mathbb{T}^{2}}$ is  the unique attractive stationary measure
of the PCA. 
\end{proof}
\subsection{Proof of Theorem \ref{Th:Ergo_2}}

\subsubsection{Case (a)-(i) and (b)-(i)}
\begin{proof}
Recall we abbreviated  $\alpha = p_{\mathbf{o}}(+|+++)$ and $\gamma=  p_{\mathbf{o}}(+|-++)+ p_{\mathbf{o}}(+|+-+)+ p_{\mathbf{o}}(+|++-)$. 
From Lemma \ref{lem:statsol} we know that a stationary product measure has to satisfy the condition
\begin{equation}\label{vemmonstro2}
\int p_{\mathbf{o}}(+1|x) \nu(dx) = \mathsf{p}
\end{equation}
which was equivalent to solving equation \eqref{bodybuilder}, i.e.

\begin{equation}\label{eq:f}
2(1+\alpha-\gamma) \mathsf{p}^{3}-3(1+\alpha- \gamma) \mathsf{p}^{2}+(3\alpha-\gamma-1) \mathsf{p} + (1-\alpha) = 0.
\end{equation}
Since $\mathsf{p}=\frac{1}{2}$ is a solution for the equation above, then, it can be written as
\begin{equation}\label{nabo}
2\left(\mathsf{p}-\frac{1}{2}\right) \left[(1+\alpha-\gamma)\mathsf{p}^{2} - (1+\alpha-\gamma)\mathsf{p} - (1-\alpha) \right] = 0.	
\end{equation} 

\noindent
Suppose that $1+\alpha-\gamma = 0$. 
Then, analogously as in the previous case, we have
\begin{eqnarray*}
&&P^{n}(x,\{y_{\mathbf{o}} = +1\}) \\ 
\\
&&= \hspace{-0.5cm}\sum\limits_{i \in \{0,1,2\}^{n-1}}\hspace{-0.5cm}(\alpha-p_{\mathbf{o}}(+|++-))^{\#\{k:i_{k} = 0\}}(\alpha-p_{\mathbf{o}}(+|+-+))^{\#\{k:i_{k} = 1\}}(\alpha-p_{\mathbf{o}}(+|-++))^{\#\{k:i_{k} = 2\}}P(\Theta_{i}x,\{y_{\mathbf{o}} = +1\}) \\
&&\quad	 + (1-\alpha) \sum_{l=0}^{n-2}\sum\limits_{i \in \{0,1,2\}^{l}}(\alpha-p_{\mathbf{o}}(+|++-))^{\#\{k:i_{k} = 0\}}(\alpha-p_{\mathbf{o}}(+|+-+))^{\#\{k:i_{k} = 1\}}(\alpha-p_{\mathbf{o}}(+|-++))^{\#\{k:i_{k} = 2\}}.
\end{eqnarray*}
The equation above implies that $P^{n}(x,\{y_{\mathbf{o}} = +1\}) \to \frac{1}{2}$ as $n$ approaches infinity, therefore, by means of the same argument as used in Section \ref{proof:caseb}, we conclude that the dynamics is ergodic.

Now, if $1+\alpha-\gamma \neq 0$, we have two other solutions
\begin{equation}\label{pemais}
\mathsf{p_{+}} = \frac{1+\sqrt{1+\frac{4(1-\alpha)}{1+\alpha-\gamma}}}{2}	
\end{equation}
and
\begin{equation}\label{pemenos}
\mathsf{p_{-}} = \frac{1-\sqrt{1+\frac{4(1-\alpha)}{1+\alpha-\gamma}}}{2}.	
\end{equation}
Therefore, both  $\mathsf{p_{-}}$ and $\mathsf{p_{+}}$ are inside the interval $(0,1)$ and are different from $\frac{1}{2}$ if and only if 
$3\alpha + \gamma > 5$. 
\end{proof}

\subsubsection{Case (a)-(ii)}
\begin{proof}
Let us consider a PCA with attractive dynamics. Again, by using Lemma~\ref{lem:statsol}, we can find a map $F:[0,1] \rightarrow \mathbb{R}$ \begin{equation}\label{fidelcastro}
F(\pp) = 2(1+\alpha-\gamma)\pp^{3}-3(1+\alpha-\gamma)\pp^{2}+(3\alpha-\gamma)\pp+(1-\alpha){}
\end{equation}
such that its fixed points correspond to the parameters of the stationary Bernoulli product measures.
We will show that $F$ has a unique attractive fixed point
at $\pp=\frac{1}{2}$, that is, such fixed point satisfies $F^n(\mathsf{q}) \to \pp$ as $n$ approaches infinity for any point $\mathsf{q} \in [0,1]$.
Let us prove that $F$ is an increasing function that satisfies 
\begin{eqnarray}\label{malandramente}
\begin{cases}
F(\pp) > \pp &\text{for all}\; \pp < \frac{1}{2}, \\
F(\frac{1}{2}) = \frac{1}{2} &\text{and} \\
F(\pp) < \pp &\text{for all}\; \pp > \frac{1}{2}.
\end{cases}	
\end{eqnarray}
Suppose that $1+\alpha-\gamma <0$.  Due to the attractiveness of the dynamics, it follows that $3\alpha \geq \gamma$ and
the minimum value of $F'$ given by $F'(0)=F'(1) = 3\alpha-\gamma$ is nonnegative. Therefore, $F$ is increasing. 
Moreover, the property (\ref{malandramente}) follows from the identity
\begin{equation}
F(\pp)-\pp = 2\left(\mathsf{p}-\frac{1}{2}\right) \left[(1+\alpha-\gamma)\mathsf{\pp}^{2} - (1+\alpha-\gamma)\mathsf{\pp} - (1-\alpha) \right] 	
\end{equation}
where $(1+\alpha-\gamma)\mathsf{\pp}^{2} - (1+\alpha-\gamma)\mathsf{\pp} - (1-\alpha)<0$  for all $\pp \neq \frac{1}{2}$. 
Now, let us consider the case where $1+\alpha-\gamma > 0$. The attractiveness of the dynamics implies that
$\gamma \geq 3(1-\alpha)$, so, the minimum value of $F'$ is $F'(\frac{1}{2})= (-3+3\alpha+\gamma)/2  \geq 0$. Again, we prove that $F$ is increasing. 
Furthermore, we have (\ref{malandramente}) by means of the equation
\begin{equation}
F(\pp)-\pp = 2\left(\mathsf{\pp}-\frac{1}{2}\right)(1+\alpha-\gamma)(\pp-\mathsf{\pp}_{-})(\pp-\mathsf{p}_{+}) 	
\end{equation}
where $\mathsf{\pp}_{-}<0$ and $\mathsf{\pp}_{+}>1$ are given by equation (\ref{pemenos}) and (\ref{pemais}), respectively. 
Since $F$ is increasing, $F(0)=1-\alpha < \frac{1}{2}$ and $F(1)=\alpha > \frac{1}{2}$, then $F(\pp)$ belongs to $\left[1-\alpha,\frac{1}{2} \right) \subseteq \left[0,\frac{1}{2} \right)$ for all $\pp$ in $\left[0,\frac{1}{2} \right)$ and 
$F(\pp)$ belongs to $\left(\frac{1}{2},\alpha \right] \subseteq \left(\frac{1}{2},1 \right]$ for all $\pp$ in $\left(\frac{1}{2},1\right]$. Using the continuity of $F$, we easily conclude that
$\lim_{n \to \infty}F^{n}(\mathsf{q})=\frac{1}{2}$ for every point $\mathsf{q}$ that belongs to the interval $[0,1]$, therefore, $\pp=\frac{1}{2}$ is the unique attractive fixed point for $F$.

It follows from  equation (\ref{bodybuilder2})  that
\[P^{n+1}(x_{-},\{y_{\mathbf{o}}=+1\}) = F(P^{n}(x_{-},\{y_{\mathbf{o}}=+1\}))\]
and
\[P^{n+1}(x_{+},\{y_{\mathbf{o}}=+1\}) = F(P^{n}(x_{+},\{y_{\mathbf{o}}=+1\})),\]
where $x_{-}$ and $x_{+}$ are respectively the configurations with all spins $-1$ and $+1$ on $\mathbb{T}^3$. The conclusion above implies that both $P^{n}(x_{-},\{y_{\mathbf{o}}=+1\})$ and $P^{n}(x_{+},\{y_{\mathbf{o}}=+1\})$ converge to
$\frac{1}{2}$ as $n$ approaches infinity.
Therefore,
since the inequality $x_{-}\leq x\leq x_{+}$ holds for every configuration $x$, it follows from  Definition~\ref{def_attr}
 that
\begin{equation}
P^{n}(x_{-},\{y_{\mathbf{o}} = +1\}) \leq P^{n}(x,\{y_{\mathbf{o}} = +1\})\leq P^{n}(x_{+},\{y_{\mathbf{o}} = +1\}),		
\end{equation}
therefore,
\begin{equation}
\lim_{n \to \infty}P^{n}(x,\{y_{\mathbf{o}} = +1\}) =\frac{1}{2}.
\end{equation}
Finally, we conclude that the probability $P^{n}(x,\cdot)$ converges to $\mathsf{Bern(\frac{1}{2})}^{\otimes \mathbb{T}^3}$ as $n$ approaches infinity, independently on the initial configuration $x$, hence, the PCA dynamics is ergodic.
\end{proof}
\subsubsection{Case (a)-(iii) and (b)-(ii)}\label{proof:repulsive}
\begin{proof}
Let us consider a new PCA described by a probability kernel $Q$ defined by
\begin{equation}
Q(dy|x) = \bigotimes\limits_{i \in \mathbb{T}^3} q_{i}(dy_{i}|x),	
\end{equation}
where each probability $q_{i}$ is given by 
\begin{equation}
q_{i}(\,\cdot\,|x) = p_{i}(\,\cdot\,|-x).	
\end{equation}	
It is easy to see that this PCA satisfies the spin-flip condition. In the case where we have $3\alpha+\gamma \geq 1$, 
if we consider $\alpha'$ and $\gamma'$ respectively defined by $\alpha'=q_{\mathbf{o}}(+|+++)$ and
$\gamma'=q_{\mathbf{o}}(+|++-)+q_{\mathbf{o}}(+|+-+)+q_{\mathbf{o}}(+|-++)$, then we have 
\[1+\alpha'-\gamma' = -(1+\alpha-\gamma) \neq 0,\]
and
\[3\alpha'+\gamma'= 6 - (3\alpha+\gamma)\leq 5.\]
Therefore, in this case the PCA dynamics described by $Q$ is ergodic. It is easy to check that $P^{n}(x,\cdot) = Q^{n}((-1)^{n}x,\cdot)$ holds for every positive integer $n$
and each configuration $x$. Therefore, the ergodicity of $P$ follows.	

In order to prove the non-ergodicity for the case $3\alpha +\gamma <1$, let us consider again the function $F:[0,1] \rightarrow \mathbb{R}$ given by 
equation (\ref{fidelcastro}).
 It is straightforward to show that
  \begin{equation}
    F(\pp) - (1-\pp) = 2(1-\alpha-\gamma)\left(\pp - \frac{1}{2}\right)( 1 - \mathsf{q_{-}})( 1 - \mathsf{q_{+}}),
  \end{equation}
where  $\mathsf{q_{-}}$ and $\mathsf{q_{+}}$ are the elements in the interval $(0,1)$ given by
\begin{equation}\label{qemais}
\mathsf{q_{-}} = \frac{1+\sqrt{1-\frac{4\alpha}{1+\alpha-\gamma}}}{2}	
\end{equation}
and
\begin{equation}\label{qemenos}
\mathsf{q_{+}} = \frac{1-\sqrt{1-\frac{4\alpha}{1+\alpha-\gamma}}}{2},	
\end{equation}  
respectively. It follows that
\begin{equation}
  \begin{cases}
    F(\pp) < 1 - \pp & \text{if $\pp \in [0,\mathsf{q}_{-})$,}\\
    F(\pp) > 1 - \pp & \text{if $\pp \in (\mathsf{q}_{-},\frac{1}{2})$,}\\
    F(\pp) < 1 - \pp & \text{if $\pp \in (\frac{1}{2},\mathsf{q}_{+})$, and}\\
    F(\pp) > 1 - \pp & \text{if $\pp \in (\mathsf{q}_{+},1].$}
  \end{cases}
\end{equation} 
Because of the repulsiveness of the dynamics, we have $3\alpha-\gamma \leq 0$ and $F'(\frac{1}{2}) = \frac{1}{2}(-3+3\alpha+\gamma)<-1$, thus, $F$ is a decreasing function. In addition, we have $F(\pp) = 1-F(1-\pp)$ for every $\pp$ in $[0,1]$. So, we obtain
\begin{equation}
  \begin{cases}
    \pp < F^2(\pp) < \mathsf{q}_{-} & \text{if $\pp \in [0,\mathsf{q}_{-})$,}\\
    \mathsf{q}_{-} < F^2(\pp) < \pp & \text{if $\pp \in (\mathsf{q}_{-},\frac{1}{2})$,}\\
    \pp < F^2(\pp) < \mathsf{q}_{+}  & \text{if $\pp \in (\frac{1}{2},\mathsf{q}_{+})$, and}\\
    \mathsf{q}_{+} < F^2(\pp) < \pp & \text{if $\pp \in (\mathsf{q}_{+},1].$}
  \end{cases}
\end{equation}
Therefore, we conclude that
\begin{equation}\label{osc1}
\lim_{n \to \infty}F^{2n}(\pp) = 
  \begin{cases}
    \mathsf{q}_{-} &\text{if $\pp \in [0,\frac{1}{2})$, and}\\
    \mathsf{q}_{+} &\text{if $\pp \in (\frac{1}{2},1]$;}
  \end{cases}
\end{equation}
similarly, we also have
\begin{equation}\label{osc2}
\lim_{n \to \infty}F^{2n+1}(\pp) = 
  \begin{cases}
    \mathsf{q}_{+} &\text{if $\pp \in [0,\frac{1}{2})$, and}\\
    \mathsf{q}_{-} &\text{if $\pp \in (\frac{1}{2},1]$.}
  \end{cases}
\end{equation}
Thus, we finally conclude that, by means of equations (\ref{bodybuilder2}), (\ref{osc1}) and (\ref{osc2}), the probabilities 
$P^{2n+1}(x_{+},\cdot)$ and  $P^{2n}(x_{+},\cdot)$ converge to $\mathsf{Bern(}\mathsf{q_{-}}\mathsf{)}^{\otimes \mathbb{T}^3}$ and $\mathsf{Bern(}\mathsf{q_{+}}\mathsf{)}^{\otimes \mathbb{T}^3}$, respectively,  as $n$ approaches infinity. So, the PCA dynamics is not ergodic. 

\end{proof}
\subsection{Proof of Proposition~\ref{prop:Ex1} }
\begin{proof}
The PCA is fully described by the numbers
\begin{eqnarray*}
p_{\mathbf{o}}(+|+++) &=& \frac{1}{2}\left(1+\tanh\beta(J_{0}+J_{1}+J_{2})\right),\\
p_{\mathbf{o}}(+|++-) &=& \frac{1}{2}\left(1+\tanh\beta(J_{0}+J_{1}-J_{2})\right),\\
p_{\mathbf{o}}(+|+-+) &=& \frac{1}{2}\left(1+\tanh\beta(J_{0}-J_{1}+J_{2})\right),	
\end{eqnarray*}
and
\begin{equation*}
p_{\mathbf{o}}(+|-++) = \frac{1}{2}\left(1+\tanh\beta(-J_{0}+J_{1}+J_{2})\right).	
\end{equation*}
Note that assumption (C1) from Example 1 implies that $J_{0}+J_{1}-J_{2}<J_{0}+J_{1}+J_{2}$, $J_{0}-J_{1}+J_{2}<J_{0}+J_{1}+J_{2}$, and $-J_{0}+J_{1}+J_{2}<J_{0}+J_{1}+J_{2}$; and
at most one of the quatities $J_{0}+J_{1}-J_{2}$, $J_{0}-J_{1}+J_{2}$ and 
$-J_{0}+J_{1}+J_{2}$ can be equal zero.
Therefore, the map $g:\mathbb{R} \rightarrow \mathbb{R}$ given by
\begin{eqnarray*}
g(\beta)&=&1+\alpha-\gamma \\
&=& \frac{1}{2}(\tanh\beta(J_{0}+J_{1}+J_{2})-\tanh\beta(J_{0}+J_{1}-J_{2})-\tanh\beta(J_{0}-J_{1}+J_{2})-\tanh\beta(-J_{0}+J_{1}+J_{2}))
\end{eqnarray*}
satisfies $g(0)=0$ and
\begin{eqnarray*}
g'(\beta) &=& \frac{1}{2}\Bigg(\frac{J_{0}+J_{1}+J_{2}}{\cosh^{2}\beta(J_{0}+J_{1}+J_{2})}-\frac{J_{0}+J_{1}-J_{2}}{\cosh^{2}\beta(J_{0}+J_{1}-J_{2})} 
-\frac{J_{0}-J_{1}+J_{2}}{\cosh^{2}\beta(J_{0}-J_{1}+J_{2})}\\
&&\quad -\frac{-J_{0}+J_{1}+J_{2}}{\cosh^{2}\beta(-J_{0}+J_{1}+J_{2})}\Bigg)\\
&<& \frac{1}{2\cosh^{2}\beta(J_{0}+J_{1}+J_{2})}\left((J_{0}+J_{1}+J_{2}) - (J_{0}+J_{1}-J_{2})-(J_{0}-J_{1}+J_{2})-(-J_{0}+J_{1}+J_{2})\right) =0.	
\end{eqnarray*}
It follows that $g(\beta)=1+\alpha-\gamma <0$ for all $\beta>0$.{}
Moreover, note that the function $f:\mathbb{R} \rightarrow \mathbb{R}$ given by 
\begin{eqnarray*}
f(\beta) &=& 3\alpha + \gamma\\
&=& 3 + \frac{3}{2}\tanh\beta(J_{0}+J_{1}+J_{2}) + \frac{1}{2}(\tanh\beta(J_{0}+J_{1}-J_{2})+\tanh\beta(J_{0}-J_{1}+J_{2})+\tanh\beta(-J_{0}+J_{1}+J_{2}))
\end{eqnarray*}
is increasing, $f(0) = 3$, and $\lim_{\beta \to \infty}f(\beta) \geq 5 + \frac{1}{2}$. It follows that there is a unique positive real number $\beta_{c}$	that satisfies $f(\beta_{c})=5$.
Since this PCA dynamics satisfies the spin-flip property and is attractive, according to Theorem \ref{Th:Ergo_2}, the PCA is ergodic for $\beta \leq \beta_{c}$ and non-ergodic for $\beta > \beta_{c}$.

Since we proved the result considering the case where condition $(C1)$ holds, the proof for the case $(C2)$ is straightforward. 
\end{proof}
\subsection{Proof of Proposition \ref{prop:Ex2}}
\begin{proof}
Clearly the PCA satisfies the spin-flip property. Note that in both cases we have $1+\alpha-\gamma = 2\epsilon -1$. It follows that the PCA is ergodic 
for $\epsilon = \frac{1}{2}$. Furthermore, note that the PCA is attractive for $0< \epsilon < \frac{1}{2}$, repulsive for $\frac{1}{2}<\epsilon<1$, and
in both cases we have $1+\alpha-\gamma \neq 0$. 

Let us suppose that $\epsilon \in (0,\frac{1}{2})$. Since $3\alpha+\gamma=6(1-\epsilon)$, it follows from Theorem \ref{Th:Ergo_2} that the PCA is non-ergodic
for $\epsilon < \frac{1}{6}$ and ergodic for $\frac{1}{6} \leq \epsilon < \frac{1}{2}$.	Now, if $\epsilon \in (\frac{1}{2},1)$, then again by Theorem \ref{Th:Ergo_2}, the PCA is ergodic for $\frac{1}{2}<\epsilon \leq \frac{5}{6}$ and non-ergodic for $\frac{5}{6}< \epsilon < 1$. 
\end{proof}


\appendix
\section{Appendix}\label{app1}
\subsection{Proof of Theorem \ref{stm}}
\setcounter{equation}{0}
\renewcommand{\theequation}{A\thechapter.\arabic{equation}}
Before we follow to the proof of Theorem \ref{stm} it will be convenient to construct a special sequence $(\Delta_{n})_{n \in \mathbb{N}}$ of subsets of $\mathbb{Z}\times \vv$. Given a 
positive integer $n$ and a nonempty finite subset $F$ of $\mathsf{V}$, let us define a subset $\Delta(n,F)$ of $\mathbb{Z}\times \vv$ as follows.
Let $\Lambda_{n}$ be the set given by
\[\Lambda_{n} = \{(n,i) : i \in F\},\]
and for each integer $m < n$ let 
\[\Lambda_{m} =\bigcup\limits_{\mathsf{x} \in \Lambda_{m+1}}U(\mathsf{x})\cup\{(m,i): i\in F\}\]
Then, we define $\Delta(n,F)$ by
\[\Delta(n,F) = \bigcup\limits_{m=-n}^{n}\Lambda_{m}.\]

\begin{remark}\label{rmks}
Observe that
\begin{enumerate}[label=(\alph*),ref=\alph*]
\item[\textnormal{(}a\textnormal{)}] $\Delta(n,F)$ is a finite subset of $\mathbb{Z}\times \vv$,
\item[\textnormal{(}b\textnormal{)}] we have $\{-n,\dots,0,\dots,n\}\times F \subseteq \Delta(n,F) \subseteq \{-n,\dots,0,\dots,n\}\times \mathsf{V}$, and
\item[\textnormal{(}c\textnormal{)}] for every point $\mathsf{x}$ in $\Delta(n,F)$, if $\pi_{\mathbb{Z}}(\mathsf{x}) \neq -n$, then $U(\mathsf{x}) \subseteq \Delta(n,F)$. 
\end{enumerate}
\end{remark}

Now, if $\varphi$ is a one-to-one function from $\mathbb{N}$ onto $\mathsf{V}$, then let
\begin{equation}
\Delta_{1} = \Delta(1,\{\varphi(1)\}),
\end{equation}
and
\begin{equation}
\Delta_{n+1} = \Delta(n+1,\pi_{\mathsf{V}}(\Delta_{n})\cup\{\varphi(n+1)\})	
\end{equation}	
for each positive integer $n$. Observe that $(\Delta_{n})_{n \in \mathbb{N}}$ is an increasing sequence of elements of $\mathscr{S}$ such that 
$\mathbb{Z}\times \vv = \bigcup\limits_{n \in \mathbb{N}}\Delta_{n}$.

\begin{lemma}\label{lemma1}
Let $\Delta = \Delta_{m}$ for some $m \in \mathbb{N}$, and let $\overline{\Delta}$ be an element of $\mathscr{S}$ defined by
\[\overline{\Delta} = \bigcup\limits_{\substack{\mathsf{x} \in \Delta \\ \pi_{\mathbb{Z}}(\mathsf{x}) = -m}}U(\mathsf{x}).\]	
Given a finite volume configuration $\xi$ in $S^{\Delta}$, the measure $\lambda^{\xi}$ on $(\Omega, \mathscr{F}_{\,\overline{\Delta}})$ 
defined by
\begin{equation}
\lambda^{\xi}(B) = \int_{B}\;\prod\limits_{\mathsf{x} = (n,i) \in \Delta} p_{i}(\xi_{\mathsf{x}}|(\xi \omega_{\Delta^{c}})_{n-1}) \,\mu_{\nu}(d\omega)	
\end{equation}	
can be expressed as
\begin{equation}
\lambda^{\xi}(B) = \int_{B}\mathds{1}_{[\xi]}(\omega)\mu_{\nu}(d\omega).	
\end{equation}
\end{lemma}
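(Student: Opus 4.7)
The plan is to evaluate both sides explicitly and reduce the identity to an integral on the single time slice $-m-1$, the only level at which the integrand depends on $\omega$.

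I would first set $F_n := \{i \in \mathsf{V} : (n,i) \in \Delta\}$ for $n \in \{-m,\ldots,m\}$, so that $\Delta = \bigcup_{n=-m}^{m} \{n\} \times F_n$. By construction $\overline{\Delta}$ lives entirely at time $-m-1$, hence $\overline{\Delta} \cap \Delta = \emptyset$. The key structural input, coming from Remark~\ref{rmks}(c), is that for every $\mathsf{x}=(n,i) \in \Delta$ with $n \geq -m+1$ one has $U(\mathsf{x}) \subseteq \Delta$, equivalently $U(i) \subseteq F_{n-1}$, while for $\mathsf{x}=(-m,i) \in \Delta$ one has $U(\mathsf{x}) \subseteq \overline{\Delta}$ by definition of $\overline{\Delta}$.

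Second, I would simplify the integrand of $\lambda^{\xi}$. Since $p_i(s|x)$ depends on $x$ only through its restriction to $U(i)$, for $\mathsf{x}=(n,i) \in \Delta$ with $n \geq -m+1$ the factor $p_i(\xi_{\mathsf{x}} | (\xi\omega_{\Delta^c})_{n-1})$ equals $p_i(\xi_{\mathsf{x}} | \xi_{n-1})$ (an expression not depending on $\omega$), whereas for $\mathsf{x}=(-m,i) \in \Delta$ it equals $p_i(\xi_{\mathsf{x}} | \omega_{-m-1})$ (depending on $\omega$ only through $\omega_{\overline{\Delta}}$). Setting
\[ C(\xi) := \prod_{n=-m+1}^{m} \prod_{i \in F_n} p_i(\xi_{(n,i)} | \xi_{n-1}), \qquad g(y) := \prod_{i \in F_{-m}} p_i(\xi_{(-m,i)} | y), \]
this gives $\lambda^{\xi}(B) = C(\xi) \int_B g(\omega_{-m-1}) \, \mu_{\nu}(d\omega)$. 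Since the integrand depends only on the time-$(-m-1)$ coordinates and stationarity of $\nu$ means that the time-$(-m-1)$ marginal of $\mu_{\nu}$ is $\nu$, I arrive at $\lambda^{\xi}(B) = C(\xi) \int \mathds{1}_B(y)\, g(y)\, \nu(dy)$.

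Third, I would evaluate $\mu_{\nu}(B \cap [\xi])$ directly using (\ref{muu}) with $t=-m-1$ and $2m+1$ steps, combined with the product form (\ref{pcadef}) of $P$. The event $B \cap [\xi]$ factors across time slices: $\omega_{-m-1} \in B$ at the bottom level, and $\omega_n|_{F_n} = \xi|_{F_n}$ at each $n \in \{-m,\ldots,m\}$. For each $n$, (\ref{pcadef}) gives $P(x_{n-1}, \{y : y|_{F_n} = \xi|_{F_n}\}) = \prod_{i \in F_n} p_i(\xi_{(n,i)}|x_{n-1})$. An induction from $n=m$ down to $n=-m+1$ collapses each such factor: the factor at level $n$ depends on $x_{n-1}$ only through $x_{n-1}|_{F_{n-1}}$, because $U(i) \subseteq F_{n-1}$ for each $i \in F_n$, and the indicator at level $n-1$ pins $x_{n-1}|_{F_{n-1}}$ to $\xi_{n-1}|_{F_{n-1}}$, so the factor reduces to the constant $\prod_{i \in F_n} p_i(\xi_{(n,i)} | \xi_{n-1})$. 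After all collapses only the bottom factor $\prod_{i \in F_{-m}} p_i(\xi_{(-m,i)} | x_{-m-1}) = g(x_{-m-1})$ retains dependence on the initial time, producing $\mu_{\nu}(B \cap [\xi]) = C(\xi) \int_B g(x_{-m-1})\, \nu(dx_{-m-1})$, which matches the formula derived for $\lambda^{\xi}(B)$.

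The main obstacle is the bookkeeping in this downward induction: one needs to verify at each step that the intermediate integrand after collapsing levels $m, m-1, \ldots, n+1$ depends on $x_n$ only through $x_n|_{F_n}$, so that the next indicator turns it into a constant. This is exactly where the containment $U(\mathsf{x}) \subseteq \Delta$ from Remark~\ref{rmks}(c) is indispensable; without it the conditioning coordinates at level $n-1$ could escape $F_{n-1}$ and leak extra dependence into the integral.
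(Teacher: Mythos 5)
Your proof is correct and follows essentially the same route as the paper's: both arguments rest on property (c) of the sets $\Delta_m$ (every $\mathsf{x}\in\Delta$ outside the bottom time slice has $U(\mathsf{x})\subseteq\Delta$) to see that the integrand depends on $\omega$ only through $\omega_{\overline{\Delta}}$, and on the defining identity (\ref{muu}) to match the resulting expression with $\mu_{\nu}(B\cap[\xi])$. The only difference is presentational: the paper reduces to the atoms $[\zeta]$, $\zeta\in S^{\overline{\Delta}}$, and leaves the telescoping evaluation of $\mu_{\nu}([\zeta]\cap[\xi])$ implicit, whereas you carry out that downward induction explicitly for a general $B\in\mathscr{F}_{\overline{\Delta}}$.
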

\begin{proof}[Proof of Lemma \ref{lemma1}]
It suffices to show the identity for cylinder sets of the form $[\zeta]$, where each $\zeta$ belongs to $S^{\overline{\Delta}}$. The result follows 
by using the fact that the map
\[\omega \mapsto \prod\limits_{\mathsf{x} = (n,i) \in \Delta} p_{i}(\xi_{\mathsf{x}}|(\xi \omega_{\Delta^{c}})_{n-1})\]
depends only on the values of $\omega$ assumed on $\overline{\Delta}$. 	
\end{proof}

\begin{proof}[Proof of Theorem~\ref{stm}]
Let us fix a set $\Lambda \in \mathscr{S}$ and a finite volume configuration $\sigma$ in $S^{\Lambda}$. Let $\Delta = \Delta_{m}$ for some positive integer 
$m$ such that 
\[\{\mathsf{x} \in \mathbb{Z}\times \vv : (\{\mathsf{x}\}\cup U(\mathsf{x})) \cap \Lambda \neq \emptyset\} \subseteq \Delta_{m}.\]
Then, for each $\omega$ in $\Omega$, we have
\begin{eqnarray*}
e^{-H_{\Lambda}^{\Phi}	(\sigma \omega_{\Lambda^{c}})} &=& \prod\limits_{\substack{\mathsf{x} = (n,i) \\ (\{\mathsf{x}\}\cup U(\mathsf{x})) \cap \Lambda \neq \emptyset }} p_{i}((\sigma \omega_{\Lambda^{c}})_{\mathsf{x}}|(\sigma \omega_{\Lambda^{c}})_{n-1})\\
&=& \frac{\prod\limits_{\mathsf{x} = (n,i) \in \Delta} p_{i}((\sigma \omega_{\Lambda^{c}})_{\mathsf{x}}|(\sigma \omega_{\Lambda^{c}})_{n-1})}{\prod\limits_{\substack{\mathsf{x} = (n,i) \in \Delta \\ (\{\mathsf{x}\}\cup U(\mathsf{x})) \cap \Lambda = \emptyset }} p_{i}(\omega_{\mathsf{x}}|\omega_{n-1})},
\end{eqnarray*}
thus
\begin{equation}\label{vtnc}
\frac{e^{-H_{\Lambda}^{\Phi}(\sigma \omega_{\Lambda^{c}})}}{\sum\limits_{\sigma ' \in S^{\Lambda}} e^{-H_{\Lambda}^{\Phi}(\sigma ' \omega_{\Lambda^{c}})}}	
= \frac{\prod\limits_{\mathsf{x} = (n,i) \in \Delta} p_{i}((\sigma \omega_{\Lambda^{c}})_{\mathsf{x}}|(\sigma \omega_{\Lambda^{c}})_{n-1})}{\sum\limits_{\sigma ' \in S^{\Lambda}}\prod\limits_{\mathsf{x} = (n,i) \in \Delta} p_{i}((\sigma' \omega_{\Lambda^{c}})_{\mathsf{x}}|(\sigma' \omega_{\Lambda^{c}})_{n-1})}.
\end{equation}
Now, given a finite volume configuration $\eta$ in $S^{\Delta \backslash  \Lambda}$, using equation (\ref{vtnc}), we obtain
\begin{eqnarray*}
\int_{[\eta]}\mathds{1}_{[\sigma]}(\omega)\mu_{\nu}(d\omega) &=& \lambda^{\sigma\eta}(\Omega) = \int\prod\limits_{\mathsf{x} = (n,i) \in \Delta} p_{i}((\sigma\eta)_{\mathsf{x}}|(\sigma \eta \omega_{\Delta^{c}})_{n-1}) \,\mu_{\nu}(d\omega)\\
&=& \sum\limits_{\zeta \in S^{\Lambda}} \int \frac{e^{-H_{\Lambda}^{\Phi}(\sigma \eta\omega_{\Delta^{c}})}}{\sum\limits_{\sigma ' \in S^{\Lambda}} e^{-H_{\Lambda}^{\Phi}(\sigma ' \eta\omega_{\Delta^{c}})}} \prod\limits_{\mathsf{x} = (n,i) \in \Delta} p_{i}((\zeta\eta)_{\mathsf{x}}|(\zeta \eta \omega_{\Delta^{c}})_{n-1}) \,\mu_{\nu}(d\omega)\\
&=&	\sum\limits_{\zeta \in S^{\Lambda}} \int \frac{e^{-H_{\Lambda}^{\Phi}(\sigma \eta\omega_{\Delta^{c}})}}{\sum\limits_{\sigma ' \in S^{\Lambda}} e^{-H_{\Lambda}^{\Phi}(\sigma ' \eta\omega_{\Delta^{c}})}} \lambda^{\zeta \eta}(d\omega) \\
&=&	\sum\limits_{\zeta \in S^{\Lambda}} \int \frac{e^{-H_{\Lambda}^{\Phi}(\sigma \eta\omega_{\Delta^{c}})}}{\sum\limits_{\sigma ' \in S^{\Lambda}} e^{-H_{\Lambda}^{\Phi}(\sigma ' \eta\omega_{\Delta^{c}})}} \mathds{1}_{[\zeta \eta]} \mu_{\nu}(d\omega) \\
&=& \int_{[\eta]} \frac{e^{-H_{\Lambda}^{\Phi}(\sigma \omega_{\Lambda^{c}})}}{\sum\limits_{\sigma ' \in S^{\Lambda}} e^{-H_{\Lambda}^{\Phi}(\sigma ' \omega_{\Lambda^{c}})}} \mu_{\nu}(d\omega).
\end{eqnarray*}	
Since $(\Delta_{n})_{n \in \mathbb{N}}$ is an increasing sequence of elements of $\mathscr{S}$ such that 
$\mathbb{Z}\times \vv= \bigcup\limits_{n \in \mathbb{N}}\Delta_{n}$, it follows that the equality
\begin{equation}
\mu_{\nu}([\sigma]|\mathscr{F}_{\Lambda^{c}})(\omega) = \frac{e^{-H_{\Lambda}^{\Phi}(\sigma \omega_{\Lambda^{c}})}}{\sum\limits_{\sigma ' \in S^{\Lambda}} e^{-H_{\Lambda}^{\Phi}(\sigma ' \omega_{\Lambda^{c}})}} 	
\end{equation}
holds for $\mu_{\nu}$-almost every point $\omega$ in $\Omega$.
\end{proof}

\subsection{Proof of Theorem \ref{drift}}
Let $m$ and $N$ be integers, where $N \geq 0$, and let us consider the set 
\begin{equation}\label{d}
\Delta = \{m\} \times \{j \in \mathsf{V}: d(\mathbf{o},j) \leq N\}.	
\end{equation}	
If we consider the nonempty finite subset $\Lambda$ of $\mathbb{Z}\times \mathsf{V}$ given by
\begin{equation}\label{l}
\Lambda = \bigcup\limits_{l = 0}^{N}\{m+l\} \times \{j \in \mathsf{V}: d(\mathbf{o},j) \leq N-l\},	
\end{equation}
it follows that
\begin{eqnarray*}
e^{-H_{\Lambda}^{\Phi}(\xi \omega_{\Lambda^{c}})}
&=& \prod\limits_{\mathsf{x} = (n,i) \in \Lambda} p_{i}(\xi_{\mathsf{x}}|(\xi \omega_{\Lambda^{c}})_{n-1}) \cdot \prod\limits_{\substack{\mathsf{x} = (n,i) \notin \Lambda \\ U(\mathsf{x}) \cap \Lambda \neq \emptyset}} p_{i}(\omega_{\mathsf{x}}|(\xi \omega_{\Lambda^{c}})_{n-1})\\
&=& \prod\limits_{\mathsf{x} = (n,i) \in \Lambda} p_{i}(\xi_{\mathsf{x}}|(\xi \omega_{\Lambda^{c}})_{n-1}) \cdot \prod\limits_{\substack{\mathsf{x} = (n,i) \notin \Lambda \\ U(\mathsf{x}) \cap \Lambda \neq \emptyset}} p_{i}(\omega_{\mathsf{x}}|\omega_{n-1})
\end{eqnarray*}
holds for all finite volume configuration $\xi$ in $S^{\Lambda}$ and for every $\omega$ in $\Omega$. Since $\mu$ is a Gibbs measure, then for $\mu$-almost every point $\omega$ in $\Omega$ we have
\begin{equation*}
\mu([\xi]|\mathscr{F}_{\Lambda^{c}})(\omega) = \frac{\prod\limits_{\mathsf{x} = (n,i) \in \Lambda}p_{i}(\xi_{\mathsf{x}}|(\xi \omega_{\Lambda^{c}})_{n-1})}{\sum\limits_{\eta \in S^{\Lambda}}\prod\limits_{\mathsf{x} = (n,i) \in \Lambda}p_{i}(\eta_{\mathsf{x}}|(\eta \omega_{\Lambda^{c}})_{n-1})} 
= \prod\limits_{\mathsf{x} = (n,i) \in \Lambda} p_{i}(\xi_{\mathsf{x}}|(\xi \omega_{\Lambda^{c}})_{n-1}),
\end{equation*}
and summing over all possible spins inside the volume $\Lambda\backslash \Delta$, we conclude that 
\begin{equation}\label{cami}
\mu([\xi_{\Delta}]|\mathscr{F}_{\Lambda^{c}})(\omega) = \prod\limits_{\mathsf{x} = (m,i) \in \Delta} p_{i}(\xi_{\mathsf{x}}|\omega_{m-1}).
\end{equation}
If we define the $\sigma$-algebra $\mathscr{F}_{<m}$ as the $\sigma$-algebra $\mathscr{F}_{\Gamma(m)}$ of subsets of $\Omega$, where 
$\Gamma(m) = \{\mathsf{x} \in S : \pi_{\mathbb{Z}}(\mathsf{x})< m\}$, it follows from (\ref{cami}) that
\begin{equation}\label{lerolero}
\mu(\{\omega' \in \Omega: \omega'_{m} \in B\}|\mathscr{F}_{<m})(\omega) = P(B|\omega_{m-1}) 
\end{equation}
holds for $\mu$-almost every $\omega$ in $\Omega$ and for any measurable subset $B$ of $\Omega_{0}$.

Since $\mu$ is invariant under time translations, it follows that the measure $\nu$ on $(\Omega_{0},\mathscr{B}(\Omega_{0}))$ defined by
\begin{equation}
\nu(B) = \mu(\{\omega' \in \Omega: \omega'_{m} \in B\})
\end{equation}
does not depends on the choice of the integer $m$, moreover, it is easy to show that $\nu$ is a stationary measure for the PCA. Using equation (\ref{lerolero})
and Kolmogorov consistency theorem, we finally conclude that $\mu = \mu_{\nu}$.

\section{Appendix}\label{app2}
\setcounter{equation}{0}
\renewcommand{\theequation}{B\thechapter.\arabic{equation}}
\begin{proof}[Proof of Lemma~\ref{lem:statsol}]
Let us proof that given a function $a: \{-1,+1\}^d \rightarrow \mathbb{R}$ and a probability measure $\mu$ on $\{-1,+1\}^{\mathbb{T}^d}$, we have
\begin{eqnarray}\label{ixala}
\sum_{\xi \in \{-1,+1\}^d}&&a(\xi) \prod_{k \in \{0,\dots,d-1\}}\mu(x_{e_k}= \xi_k) \\
&&=\sum_{l=0}^d (-1)^l \left[\sum_{\substack{I \subseteq \{0, \dots, d-1\} \\ |I| = l}}\left(\sum_{\substack{\xi \in \{-1,+1\}^d \\ \xi_m = -1\;\text{for all $m \in I$}}}(-1)^{\#\{m: \xi_m = -1\}} a(\xi)\right) \prod_{k \in \{0,\dots,d-1\}\backslash I} \mu(x_{e_k}= +1)\right] \nonumber
\end{eqnarray}
We prove the equation above by induction. For the case where $d=1$, we proof is straightforward. If we suppose that the result is proven for $d$, then
\begin{eqnarray*}
\sum_{\xi \in \{-1,+1\}^{d+1}}&&a(\xi) \prod_{k \in \{0,\dots,d\}}\mu(x_{e_k}= \xi_k) \\
&=& \sum_{\xi \in \{-1,+1\}^{d}} a(\xi,+1) \prod_{k \in \{0,\dots,d-1\}}\mu(x_{e_k}= \xi_k) \cdot \mu(x_{e_d}= +1)\\
&&+ \sum_{\xi \in \{-1,+1\}^{d}} a(\xi,-1) \prod_{k \in \{0,\dots,d-1\}}\mu(x_{e_k}= \xi_k) \cdot \mu(x_{e_d}= -1) 
\end{eqnarray*}
\begin{eqnarray*}
&=& \sum_{\xi \in \{-1,+1\}^{d}} a(\xi,+1) \prod_{k \in \{0,\dots,d-1\}}\mu(x_{e_k}= \xi_k) \cdot \mu(x_{e_d}= +1)\\
&&- \sum_{\xi \in \{-1,+1\}^{d}} a(\xi,-1) \prod_{k \in \{0,\dots,d-1\}}\mu(x_{e_k}= \xi_k) \cdot \mu(x_{e_d}= +1) \\
&&+\sum_{\xi \in \{-1,+1\}^{d}} a(\xi,-1) \prod_{k \in \{0,\dots,d-1\}}\mu(x_{e_k}= \xi_k) \\ 
&=& \sum_{l=0}^d (-1)^l \left[\sum_{\substack{I \subseteq \{0, \dots, d-1\} \\ |I| = l}}\left(\sum_{\substack{\xi \in \{-1,+1\}^d \\ \xi_m = -1\;\text{for all $m \in I$}}}(-1)^{\#\{m: \xi_m = -1\}} a(\xi,+1)\right) \prod_{k \in \{0,\dots,d\}\backslash I} \mu(x_{e_k}= +1)\right] \\
&&- \sum_{l=0}^d (-1)^l \left[\sum_{\substack{I \subseteq \{0, \dots, d-1\} \\ |I| = l}}\left(\sum_{\substack{\xi \in \{-1,+1\}^d \\ \xi_m = -1\;\text{for all $m \in I$}}}(-1)^{\#\{m: \xi_m = -1\}} a(\xi,-1)\right) \prod_{k \in \{0,\dots,d\}\backslash I} \mu(x_{e_k}= +1)\right] \\
&&+\sum_{l=0}^d (-1)^l \left[\sum_{\substack{I \subseteq \{0, \dots, d-1\} \\ |I| = l}}\left(\sum_{\substack{\xi \in \{-1,+1\}^d \\ \xi_m = -1\;\text{for all $m \in I$}}}(-1)^{\#\{m: \xi_m = -1\}} a(\xi,-1)\right) \prod_{k \in \{0,\dots,d-1\}\backslash I} \mu(x_{e_k}= +1)\right] \\
&=& \sum_{l=0}^d (-1)^l \left[\sum_{\substack{I \subseteq \{0, \dots, d\} \\ |I| = l, d \notin I}}\left(\sum_{\substack{\xi \in \{-1,+1\}^{d} \\ \xi_m = -1\;\text{for all $m \in I$}}}(-1)^{\#\{m: \xi_m = -1\}} a(\xi,+1)\right) \prod_{k \in \{0,\dots,d\}\backslash I} \mu(x_{e_k}= +1)\right] \\
&&- \sum_{l=0}^d (-1)^l \left[\sum_{\substack{I \subseteq \{0, \dots, d\} \\ |I| = l, d \notin I}}\left(\sum_{\substack{\xi \in \{-1,+1\}^d \\ \xi_m = -1\;\text{for all $m \in I$}}}(-1)^{\#\{m: \xi_m = -1\}} a(\xi,-1)\right) \prod_{k \in \{0,\dots,d\}\backslash I} \mu(x_{e_k}= +1)\right] \\
&&\sum_{l=0}^d (-1)^{l+1} \left[\sum_{\substack{I \subseteq \{0, \dots, d\} \\ |I| = l+1, d \in I}}\left(\sum_{\substack{\xi \in \{-1,+1\}^{d+1} \\ \xi_m = -1\;\text{for all $m \in I$}}}(-1)^{\#\{m: \xi_m = -1\}} a(\xi)\right) \prod_{k \in \{0,\dots,d\}\backslash I} \mu(x_{e_k}= +1)\right] \\
&=& \sum_{l=0}^d (-1)^l \left[\sum_{\substack{I \subseteq \{0, \dots, d\} \\ |I| = l, d \notin I}}\left(\sum_{\substack{\xi \in \{-1,+1\}^{d+1} \\ \xi_m = -1\;\text{for all $m \in I$}}}(-1)^{\#\{m: \xi_m = -1\}} a(\xi)\right) \prod_{k \in \{0,\dots,d\}\backslash I} \mu(x_{e_k}= +1)\right] \\
&&\sum_{l=0}^d (-1)^{l+1} \left[\sum_{\substack{I \subseteq \{0, \dots, d\} \\ |I| = l+1, d \in I}}\left(\sum_{\substack{\xi \in \{-1,+1\}^{d+1} \\ \xi_m = -1\;\text{for all $m \in I$}}}(-1)^{\#\{m: \xi_m = -1\}} a(\xi)\right) \prod_{k \in \{0,\dots,d\}\backslash I} \mu(x_{e_k}= +1)\right] \\
&=& \sum_{l=0}^{d+1} (-1)^l \left[\sum_{\substack{I \subseteq \{0, \dots, d\} \\ |I| = l}}\left(\sum_{\substack{\xi \in \{-1,+1\}^{d+1} \\ \xi_m = -1\;\text{for all $m \in I$}}}(-1)^{\#\{m: \xi_m = -1\}} a(\xi)\right) \prod_{k \in \{0,\dots,d\}\backslash I} \mu(x_{e_k}= +1)\right].
\end{eqnarray*}
Therefore the result follows.

If we consider the the particular case where $a(\xi) = p_{\mathbf{o}}(+1| \xi)$ and $\mu = \mathsf{Bern(p)}^{\otimes \mathbb{T}^d}$ that satisfies (\ref{vemmonstro}), then equation (\ref{bodybuilder}) follows. Now, if we let $a(\xi) = p_{\mathbf{o}}(+1| \xi)$ and 
$\mu = P^n(x,\cdot)$, then equations (\ref{prodoido}) and (\ref{ixala}) implies equation (\ref{bodybuilder2}). 
\end{proof}


\begin{thebibliography}{9}

\bibitem{bagnoli}
F.\ Bagnoli, F. \ Franci, R.\ Rechtman,
 \emph{Opinion Formation and Phase Transitions in a Probabilistic Cellular Automaton with Two Absorbing Phases},
 { Proceedings 5th International Conference on Cellular Automata for Research and Industry}, Lecture Notes in Computer Science, Springer, 
2002.

\bibitem{Bur}
R. M.\ Burton,  C.E.\ Pfister, J. E. Steif,
\emph{The variational principle for Gibbs states fails on trees},
 Markov Process. Related Fields, \textbf{3}, 387--406, 1995.

\bibitem{Bu}
 A.\ Busic, N. Fat\`es, J.\ Mairesse, I. Marcovici, \emph{Density classification on infinite
lattices and trees},  Electron. J. Probab. \textbf{18}, 2013.

\bibitem{Casse1}
  J. Casse,
  \emph{Probabilistic cellular automata with general alphabets possessing a Markov chain as an invariant distribution}, Advances in Applied Probability, 48(2), 369-391.

\bibitem{Casse2}
  J. Casse, J.-F. Marckert,
\emph{Markovianity of the invariant distribution of probabilistic cellular automata on the line},
Stochastic Processes and their Applications,
Volume 125, Issue 9,
2015.

\bibitem{louis} P. Dai Pra, P.-Y. Louis, S. Roelly, \emph{Stationary measures and phase transition for a class
of probabilistic cellular automata},  ESAIM Probab. Statist.\textbf{6}, 89--104, 2002.

\bibitem{Derrida}
B.\ Derrida,
\emph{Dynamical phase transition in spin model and automata},
Fundamental problem in Statistical Mechanics VII, H.\ van Beijeren, Editor,
Elsevier Science,  1990.

\bibitem{dobrushin}
  R. L. Dobrushin, V. I. Kryukov, A. L. Toom,
  \emph{Locally Interacting Systems and Their Application in Biology},
  Proceedings of the School-Seminar on Markov Processes in Biology, Held in Pushchino, Moscow Region, March, 1976.

\bibitem{dorre}
J.\ Dorrestijn , D. T.\ Crommelin , J. A.\ Biello , S. J.\ B\"oing, 
\emph{A data-driven multi-cloud model for stochastic parametrization of deep convection},
{ Phil. Trans. A} \textbf{371}, 2013.

\bibitem{ferrari} P. A.\ Ferrari, \emph{Ergodicity for a class of probabilistic cellular automata},
 Rev. Mat. Apl., \textbf{12}, 93--102, 1991.

\bibitem{georgii}
  H.O. Georgii,
  \emph{Gibbs measures and phase transitions},
  second edition,
  Walter de Gruyter, 
  Berlin/New York,
  2011.

\bibitem{GLD}
A.\ Georges, P.\ Le Doussal,
\emph{From equilibrium spin models to probabilistic cellular automata},
{ Journ.\ Stat.\ Phys.\/}, \textbf{54}, 3-4, 1011--1064, 1989.

\bibitem{pca}
  S. Goldstein, R. Kuik, J.L. Lebowitz, and C. Maes,
  \emph{From PCA's to equilibrium systems and back},
  Comm. Math. Phys. \textbf{125},
  71--79, 1989.

\bibitem{GJH}
G.\ Grinstein, C.\ Jayaprakash, and Y.\ He, 
\emph{Statistical Mechanics of Probabilistic Cellular Automata}
{ Phys. Rev. Lett.\/} \textbf{55},  1985.
  
\bibitem{hoekstra}
A.G.\ Hoekstra, J. Kroc, P.M.A.\ Sloot,
{\sl Simulating Complex Systems by Cellular Automata},
Springer,  2010.

\bibitem{Kari}
J.\ Kari,
\emph{Theory of cellular automata: A survey},
{ Theoretical Computer Science}, Volume \textbf{334}, 
Issues 1-3, 3--33, 2005.
   
\bibitem{NonLinearBook}
L.\ Lam, 
\emph{Non-Linear Physics for Beginners: Fractals, Chaos, Pattern Formation, 
Solitons, Cellular Automata and Complex Systems}, World Scientific, 1998.

\bibitem{LMS}
 J.L.\ Lebowitz, C.\ Maes,  E.R.\ Speer, 
\emph{Statistical mechanics of probabilistic cellular
automata.}, { Journ.\ Stat.\ Phys.\/} \textbf{59}, 1-2, 117--170, 1990.

\bibitem{louisthesis}P.-Y. Louis,
  \emph{Automates Cellulaires Probabilistes : mesures stationnaires, mesures de Gibbs associ\'ees et ergodicit\' e.}
   Universit\' e des Sciences et Technologie de Lille - Lille I, 2002. 

\bibitem{louis2}  P.-Y. Louis,  \emph{Ergodicity of PCA: equivalence between spatial and temporal mixing conditions}, 
  Electron. Comm. Probab., \textbf{9}, 119--134, 2004.

\bibitem{louisnardi}
  P.-Y. Louis, F. R. Nardi, \emph{Probabilistic Cellular Automata, Theory, Applications and Future Perspectives}.

\bibitem{MM1} J. Mairesse, I. Marcovici,
\emph{Probabilistic cellular automata and random fields with i.i.d. directions},
{AIHP Probabilit\`es et Statistiques}, \textbf{50}, 455--475, 2014.

\bibitem{MM2} J. Mairesse, I. Marcovici,
\emph{Uniform sampling of subshifts of finite type on grids and trees},
{Internat. J. Found. Comput. Sci.}, \textbf{28}, 263--287, 2017.

\bibitem{maes} C. Maes, S. B.\ Shlosman,  \emph{Ergodicity of probabilistic cellular automata: a constructive
criterion}, Comm. Math. Phys., \textbf{135}, 233--251, 1991.

\bibitem{malyshev} V. A.\ Malyshev, R. A.\ Minlos, \emph{Gibbs random fields}, Kluwer Academic Publishers
  Group, Dordrecht, 1991.
  
\bibitem{danuta}
D.\  Makowiec,
\emph{Modeling Heart Pacemaker Tissue by a Network of Stochastic Oscillatory Cellular Automata},
Mauri et al. (editors): Unconventional Computation and Natural Computation, Lecture Notes in Computer Science,
\textbf{7956},  138--149,  2013.

\bibitem{MST}
  I. Marcovici, M. Sablik, S. Taati, \emph{ Ergodicity of some classes of cellular automata subject to noise},
preprint arXiv, 2018.

\bibitem{PCA:order:disorder}
 J.\ Palandi, R.M.C.\ de Almeida, J.R.\ Iglesias, M.\ Kiwi, 
\emph{Cellular automaton for the order-disorder transition}, 
{Chaos, Solitons \& Fractals}, Vol. \textbf{6}, 439--445, 1995.

\bibitem{pandey}
R. B.\ Pandey, D.\ Stauffer,
\emph{Metastability with probabilistic cellular automata in an HIV infection},
{ Journ.\ Stat.\ Phys.\/} \textbf{61}, 235--240, 1990.

\bibitem{PGSFM}
M.\ Perc, J.\ G\'omez--Garde\~{n}es, A.\ Szolnoki, L.M.\ Flor\'ia, Y.\ Moreno,
\emph{Evolutionary dynamics of group interactions on structured populations: 
A review},
{ J.\ R.\ Soc.\ Interface} \textbf{10}, 2013.

\bibitem{PG}
M.\ Perc and P.\ Grigolini, 
\emph{Collective behavior and evolutionary games -- An introduction},
{ Chaos, Solitons \& Fractals} \textbf{56}, 1--5, 2013.

\bibitem{PS}
M.\ Perc and A.\ Szolnoki, 
\emph{Coevolutionary games -- A mini review},
{ Biosystems} \textbf{99}, 109--125, 2010.

\bibitem{PSS}
  A. Procacci, B. Scoppola, E. Scoppola,
\emph{Probabilistic Cellular Automata for the low-temperature 2d Ising Model},  Journal of Statistical Physics, 165,
991-1005 (2016).

\bibitem{sirakoulis2012cellular}
G. Ch.\ Sirakoulis  and S.\ Bandini (editors),
\emph{Cellular Automata: 10th International Conference on Cellular Automata for Research and Industry},
 ACRI 2012, { Proceedings, 2012, Lecture Notes in Computer Science}, Springer, 2012.

\bibitem{Steiff}
J. \ Steif,
\emph{Convergence to equilibrium and space—time bernoullicity for spin systems in the $M < \epsilon$ case},
Erg. Theory and Dyn. Sys. \textbf{11} (3), 547--5775, (1991).

\bibitem{tome}
T.\ Tom\'e, J. R. D.\ de Felicio,
\emph{Probabilistic cellular automaton describing a biological immune system},
 {\sl Phys. Rev. E} \textbf{53}, {3976--3981}, 1996.

\bibitem{tvs}
A.L.\ Toom , N.B.\ Vasilyev, O.N.\ Stavskaya, L.G.\ Mityushin,
G.L.\ Kurdyumov, S.A.\ Pirogov, {\sl Discrete local Markov systems},
in \textit{Stochastic Cellular Systems: ergodicity, memory, morphogenesis},
edited by R.L. Dobrushin, V.I. Kryukov, A.L. Toom, Manchester
University Press, 1--182,  1990.

\bibitem{Vas69}
L.N.\ Vasershtein, 
\emph{Markov processes over denumerable products of spaces
describing large system of automata}, 
{ Problemy Pereda\v{c}i Informacii} \textbf{5}, no. 3, 64--72, (1969).
 
\bibitem{Wolfram1984}
S.\ Wolfram, 
\emph{Universality and complexity in cellular automata},
{Physica D: Nonlinear Phenomena}, Vol. \textbf{10}, Issues 1-2, 1--35, 1984.

\bibitem{Wolfram:RevModPhys.55.601}
S.\ Wolfram,
\emph{Statistical mechanics of cellular automata},
  { Rev. Mod. Phys.}, \textbf{55}, {3}, 601--644,
1983.

\bibitem{Wolfram1984Nature}
S.\ Wolfram,
\emph{Cellular automata as models of complexity},
{Nature} \textbf{311}, 419--424, 1984.

\end{thebibliography}
\end{document}